\newtheorem{algorithm}[theorem]{Algorithm}
\newtheorem{assumption}[theorem]{Assumption}
\DeclareMathOperator*{\argmax}{arg\,max}
\DeclareMathOperator*{\argmin}{arg\,min}
\title{Iterative Computation of Security Strategies of Matrix Games with Growing Action Set}
\author{Lichun Li and Cedric Langbort \thanks{This work was supported in part by NSF grants 1619339 and 1151076 to Cedric Langbort.}}
\institute{The authors are with the coordinated science lab at University of Illinois at Urbana-Champaign, Champaign, IL 61801, US. Email: lichunli,langbort@illinois.edu.}
\begin{document}

\maketitle

\begin{abstract}
This paper studies how to efficiently update the saddle-point strategy, or security strategy of one player in a matrix game when the other player develops new actions in the game. It is well known that the saddle-point strategy of one player can be computed by solving a linear program. Developing a new action will add a new constraint to the existing LP. Therefore, our problem becomes how to solve the new LP with a new constraint efficiently. Considering the potentially huge number of constraints, which corresponds to the large size of the other player's action set, we use shadow vertex simplex method, whose computational complexity is  lower than linear with respect to the size of the constraints, as the basis of our iterative algorithm. We first rebuild the main theorems in shadow vertex method with relaxed assumption to make sure such method works well in our model, then analyze the probability that the old optimum remains optimal in the new LP, and finally provides the iterative shadow vertex method whose computational complexity is shown to be strictly less than that of shadow vertex method. The simulation results demonstrates our main results about the probability of re-computing the optimum and the computational complexity of the iterative shadow vertex method.
\end{abstract}

\section{Introduction}
In many non-cooperative games, players can develop new actions as the game is being played. For example, in cyber-security scenarios, attackers can suddenly start employing theretofore unknown system vulnerabilities in the form of so-called 'zero-day attacks'. The goal, then, for the defender, is to quickly and efficiently modify its response so as to reach an equilibrium of the newly formed game.

This paper focuses on two player zero sum games where one player, say player 2, can develop new actions, and has a growing action set. Player 1, who has a fixed action set, may need to update its strategy to the new saddle-point strategy when new action is revealed. Player 1's saddle-point strategy can be computed by solving a linear program (LP), and adding a new action is nothing more than adding a new constraint to the existing LP. Hence, our problem becomes how to solve the LP with a new constraint efficiently.

The corresponding dual problem looks similar to an online LP problem where new variables are introduced to the existing LP \cite{devanur2009adwords,feldman2010online,kleinberg2005multiple,molinaro2013geometry,agrawal2014dynamic}. The key idea to approximate the optimum in online LP is to assign the new variable $1$ if reward is greater than the shadow price based cost, $0$ otherwise. Unlike in the problems considered in this literature, however, the new variables introduced in the LP are not restricted to be boolean, and the central idea developed in these papers cannot be applied to our problem.

Online conic optimization is another online tool which looks for an approximated optimum of a convex optimization problem where new variables are introduced to the existing convex optimization \cite{eghbali2016worst}. A special requirement in online conic optimization is that the constraints of variable are separated. Eghbali et.al provides an algorithm to approximate the optimum based on the form of the optimal primal-dual pair which is related to the concave conjugate of the objective function \cite{eghbali2016worst}. The main issue to adopt the algorithm in our problem is that to separate the constraints of variables, we need to reformulate the objective function, and the concave conjugate of the reformulated objective function is not well defined.

While neither online LP nor online conic optimization can be directly used in our model, our previous work \cite{bopardikar2016incremental} showed how adaptive multiplicative weights could be used to find an approximate solution to the ‘fast and efficient saddle-point strategy updating’ goal as well, by adding a simple condition to the Freund and Schapire scheme. In this paper, we present an algorithm to solve this updating problem exactly.

Considering that our LP may have a large number of constraints (corresponding to a large action set of player 2), we use shadow vertex method, whose computational complexity grows very slowly with respect to the number of constraints, as the basis for our iterative algorithm. Shadow vertex method is a simplex method to solve LP. It first projects the feasible set into a two dimensional plane where the projection of the optimal vertex is still a vertex of the projection of the feasible set, and then walks along the vertices of the projected set to find the optimal one. Because our variables are in the probability space, the non-degeneracy assumption in the original shadow vertex method is violated. Therefore, we first rebuild the main theorems in shadow vertex method with a relaxed non-degeneracy assumption to make sure that shadow vertex methods works well in our model.

Recomputing the optimum is not always necessary when player 2 generates a new action, or in other words, a new constraint is added to the existing LP. If the old optimum satisfies the new constraint, then it remains optimal in the new LP. We further analyze the probability that the old optimum remains optimal in the new LP, and find that if every vertex has the same probability to be the optimal vertex, then the probability that old optimum remains optimal in the new LP increases with respect to the number of constraints. This is because as the number of constraints increases, the feasible set is getting smaller, and it is less possible for the new constraint to cut the optimal vertex off.

If the old optimum is not optimal any more in the new LP, we do not need to start the search all over again. Instead, we can test the feasibility of the previously visited shadow vertices one by one, find out the the feasible one with the best objective value, and use it as the original shadow vertex to start the search. We call the algorithm the iterative shadow vertex method, and show that the computational complexity of iterative shadow vertex method is strictly less than that of regular shadow vertex method.

The rest of the paper is organized as follows. Section II states the problem, and Section III discusses the shadow vertex simplex method with the presence of the probability vector variable. Section IV presents the necessary and sufficient condition of unchanging security strategy and the corresponding possibility, and Section V provides the iterative shadow vertex method and the computational complexity analysis. The simulation results are given in Section VI, followed by the conclusion in Section VII.

\section{Matrix Games with Growing Action Set on One Side}
Let $\mathbb{R}^n$ denote the $n$ dimensional real space. For a finite set $S$, $|S|$ denotes its cardinality, and $\Delta(S)$ is the set of probability space over $S$. Vector $\mathbf{1}$ and $\mathbf{0}$ are column vectors (whose dimension will be clear from context) with all their elements to be one and zero, respectively. $I_n$ is an $n$ dimensional identity matrix and $e_i$ is the $i$th column of $I_n$. Let $u$ and $v$ be two $n$-dimensional vectors. The plane spanned by $u$ and $v$ is denoted by $span(u,v)$, and the angle between $u$ and $v$ is denoted by $arc(u,v)$.

A matrix game is specified by a triple $(S,Q,G)$, where $S$ and $Q$ are finite sets denoting player 1 and 2's actions sets, respectively. The matrix $G\in \mathbb{R}^{|S|\times|Q|}$ is the payoff matrix whose element $G_{s,q}$ is player 1's payoff, or player 2's penalty, if player 1 and 2 play $s\in S$ and $q\in Q$, respectively. In this matrix game, player 1 and player 2 are the maximizer and minimizer, respectively. This paper considers mixed strategy as players' strategy space. Player 1's mixed strategy $\bar{x}\in \Delta(S)$ is a probability over player 1's action set $S$. Player 2's mixed strategy $\bar{y}\in \Delta(Q)$ is defined in the same way. Player 1's expected payoff is $\gamma(\bar{x},\bar{y})= \mathbf{E}_{\bar{x},\bar{y}}(G_{s,q})=\bar{x}^T G \bar{y}.$
Since both players use mixed strategies, there always exists a Nash Equilibrium $(\bar{x}^*,\bar{y}^*)$ such that
\begin{align*}
\gamma(\bar{x},\bar{y}^*)\leq \gamma(\bar{x}^*,\bar{y}^*)\leq \gamma(\bar{x}^*,\bar{y}), \forall \bar{x}\in \Delta(S) \bar{y}\in \Delta(Q),
\end{align*}
When Nash Equilibrium exists, the maxmin value meets the minmax value of the game, i.e. $\max_{\bar{x}\in \Delta(S)}\min_{\bar{y}\in \Delta(Q)}\gamma(\bar{x},\bar{y})=\min_{\bar{y}\in \Delta(Q)}\max_{\bar{x}\in \Delta(S)}\gamma(\bar{x},\bar{y})$. In this case, we say the game has a value $V$, and $\bar{x}^*$, $\bar{y}^*$ are also called the security strategy of player 1 and 2, respectively. Player 1's security strategy can be computed by solving the following linear program \cite{bacsar1998dynamic}.
\begin{align}
  V=&\max_{\bar{x}\in \mathbb{R}^n, \ell \in\mathbb{R}} \ell \label{eq: old LP 1}\\
  s.t.& G^T \bar{x} \geq \ell \mathbf{1}, \label{eq: old LP 2}\\
  & \mathbf{1}^T\bar{x}=1, \label{eq: old LP 3}\\
  &\bar{x}\geq 0. \label{eq: old LP 4}
\end{align}
Player 2's security strategy can be computed by constructing a similar linear program.

This paper considers a special case when player 2's actions are revealed gradually, and the size of player 2's action set is potentially large. Let us suppose that the current size of player 2's action set is $m$, and the size of player 1's action set is $n$ which is fixed. At some point, player 2 develops a new action $q_{new}$, which introduces a new column $g\in \mathbb{R}^n$ to the payoff matrix $G\in \mathbb{R}^{n\times m}$. The change of the payoff matrix may result in the change of player 1's security strategy, and hence player 1 may need to re-run the linear program with extended payoff matrix, as below, to get the new security strategy $\bar{x}^*_{new}$.
\begin{align}
  V_{new}=& \max_{\bar{x}\in\mathbb{R}^n,\ell\in\mathbb{R}} \ell  \label{eq: new LP 1}\\
  s.t. & G^T\bar{x} \geq \ell \mathbf{1} \label{eq: new LP 2}\\
  & g^T\bar{x} \geq \ell\label{eq: new LP 3}\\
  & \mathbf{1}^T \bar{x}=1 \label{eq: new LP 4}\\
  & \bar{x} \geq 0 \label{eq: new LP 5}
\end{align}

Comparing the new LP (\ref{eq: new LP 1}-\ref{eq: new LP 5}) with the old one (\ref{eq: old LP 1}-\ref{eq: old LP 4}), we see that the only difference is the new LP adds one more constraint, equation (\ref{eq: new LP 3}), to the old LP. Our objective is to find an efficient way to compute the new security strategy of player 1.

Our approach consists of three steps. First, considering the potentially large size of player 2's action set, in other words, a potentially large number of constraints in the LP problem, we propose to use \emph{shadow vertex simplex method}, whose average computational complexity is low with respect to the constraint size, as the basic method to solve the LP. Second, we propose a necessary and sufficient condition guaranteeing player 1's security strategy remains unchanged. If such a condition is violated, we then propose an iterative algorithm to reduce the computational time based on the transient results of the previous computation process.

\section{Shadow Vertex Simplex Method with a Probability Variable}
The shadow vertex simplex method was introduced in \cite{borgwardt1982average}, and is motivated by the observation that the simplex method is very simple in two dimensions when the feasible set forms a polygon. In this case, simplex method visits the adjacent vertices, whose total number is usually small in two dimensional cases, until it reaches the optimum vertex. The basic idea of the shadow vertex simplex method consists of two steps. It first projects a high dimensional feasible set to a two dimensional plane such that the projection of the optimum vertex is still a vertex of the projection (or shadow) of the feasible set. Then it walks along the adjacent vertices whose projections are also vertices of the shadow of the feasible set until it reaches the optimum vertex.

In order to use the shadow vertex method, we need to transform the LP (\ref{eq: old LP 1}-\ref{eq: old LP 4}) into the canonical form. The canonical form of LP (\ref{eq: new LP 1}-\ref{eq: new LP 5}) can be easily derived by replacing $G$ in (\ref{eq: old canonical LP 1}-\ref{eq: A}) with $[G\ g]$. Let $x=[\bar{x}_1\ \bar{x}_2\ \ldots,\bar{x}_{n-1}\ \ell]^T$. We have $\bar{x}=[T\ \mathbf{0}]x+e_n$, where $T=[I_{n-1}\ -\mathbf{1}]^T$, and $e_n$ is the $n$th column of $I_n$. LP (\ref{eq: old LP 1}-\ref{eq: old LP 4}) is rewritten as below.
\begin{align}
  V=&\max_{x\in \mathbb{R}^{n}}c^Tx \label{eq: old canonical LP 1}\\
  s.t.& A x \leq b, \label{eq: old canonical LP 2}
\end{align}
where $c=[\mathbf{0}^T 1]^T$, $b=[e_n^TG\ 1\ \mathbf{0}^T]^T$, and
\begin{align}
A=\left[\begin{array}{cc}
                 -G^TT & \mathbf{1} \\
                 \mathbf{1}^T & 0 \\
                 -I_{n-1} & \mathbf{0}
               \end{array}
        \right]. \label{eq: A}
\end{align}
We call the first $m$ constraints in (\ref{eq: old canonical LP 1}-\ref{eq: old canonical LP 2}) the normal constraints, and the last $n$ constraints the probability constraints. Notice that the last $n$ constraints cannot be active at the same time.

\subsection{Initial shadow vertex, projection plane, and initial searching table}
Denote the feasible set of LP (\ref{eq: old canonical LP 1}-\ref{eq: old canonical LP 2}) by $X$. We set $x_0=[\mathbf{0}\ \min_{i=1,\ldots,m}G_{n,i}]^T$ to be the initial vertex. Given a feasible solution $x\in \mathbb{R}^n$, we say constraint $i$ is active if $A_ix=b_i$. Let $\Omega_0=\{i:A_ix_0=b_i\}$ be the initial active constraint set, where $A_i$ is the $i$th row of $A$. Next, we will introduce the necessary and sufficient condition of optimality in terms of active constraint set. The theorem is similar to lemma 1.1 of \cite{borgwardt2012simplex} except that we don't require $b$ to be positive.
\begin{theorem}
  \label{theorem: optimality condition primal}
Consider a general linear program in canonical form.
\begin{align*}
  &\max_{x\in \mathbb{R}^n} c^Tx\\
  s.t.& Ax\leq b,
\end{align*}
where $A\in\mathbb{R}^{m\times n}$.
Let $x_0$ be a vertex and $\Omega_0$ be the corresponding active constraint set. Then $x_0$ is maximal with respect to $w^Tx$ for some $w\in\mathbb{R}^n\setminus \{0\}$ if and only if there exists a non-negative vector $\rho\in \mathbb{R}^{|\Omega_0|}$ such that
\begin{align}
  w=\sum_{i=1}^{|\Omega_0|} \rho_i A_{\Omega_0^i}^T, \label{eq: optimality condition}
\end{align}
where $\Omega_0^i$ is the $i$th element in $\Omega_0$.
\end{theorem}
\begin{proof}
Assume that equation (\ref{eq: optimality condition}) holds. We know that $A_{\Omega_0^i}x_{0}=b_{\Omega_0^i}\geq A_{\Omega_0^i}x$ for any feasible $x$, which implies that $w^Tx_0\geq w^T x$ for any feasible $x$.

Now assume that $w^Tx_0\geq w^T x$ for any feasible $x$. According to the inhomogeneous Farkas Theorem \cite{Iouditski2016}, we derive that $w=\sum_{i=1}^{m}\rho_iA_i^T$ for some nonnegative vector $\rho$ satisfying $\sum_{i=1}^m \rho_i b_i\leq w^T x_0=\sum_{i=1}^{m}\rho_iA_i x_0$. It implies that $\sum_{i\notin \Omega_0}\rho_i(b_i-A_ix_0) \leq 0$. Since $\rho_i\geq 0$ and $b_i-A_ix_0>0$ for $i\notin\Omega_0$, the inequality holds only if $\rho_i=0$ for $i\notin\Omega_0$, and equation (\ref{eq: optimality condition}) is true.
\end{proof}

From theorem \ref{theorem: optimality condition primal}, we see that searching for the optimum vertex is the same as looking for the active constraints whose convex cone contains the objective vector $c$. In contrast with Dantzig'g simplex method, which modifies basic variables iteratively until optimality conditions are satisfied, the shadow vertex method searches for active constraints. To make the searching process efficient, shadow vertex method limits its search to shadow vertices which are defined as below.
\begin{definition}
  \label{def: shadow vertex}
 Let $u,v$ be two linearly independent vectors and $\Gamma$ be the orthogonal projection onto $span(u,v)$. A vertex $x$ of a polygon $X$ is called a shadow vertex with respect to $span(u,v)$ if $\Gamma(x)$ is a vertex of $\Gamma(X)$.
\end{definition}
An important issue is how to design the two dimensional projection plane $span(u,v)$ such that both the initial and the optimum vertex are shadow vertices. To this end, we introduce the relaxed non-degeneracy assumption and a necessary and sufficient condition for a vertex of the feasible set to be a shadow vertex.

\begin{assumption}
\label{assumption: nondegeneracy}
Every $n$ element subset of $\{A_1,A_2,\ldots,A_{m+n},u,c\}$, where at most $n-1$ elements are from $\{A_{m+1},\ldots,A_{m+n}\}$, is linearly independent, and $\{A_1,A_2,\ldots,A_{m+n}\}$ is in general position with respect to $b$, i.e. for any $n$ element subset $\Omega$ of $\{1,\ldots,m+n\}$ such that there exists an $x\in \mathbb{R}^n$ satisfying $A_\Omega x=b_\Omega$, $A_i x\neq b_i$ for all $i\notin\Omega$.
\end{assumption}
Assumption \ref{assumption: nondegeneracy} guarantees that for any vertex $x$, there are only $n$ active constraints. The main difference between the relaxed non-degeneracy assumption and the non-degeneracy assumption in \cite{borgwardt2012simplex} is that we allow the last $n$ constraints to be linearly dependent. This difference doesn't influence the necessary and sufficient condition of a shadow vertex (Lemma 1.2 in \cite{borgwardt2012simplex}), and the proof is also similar. For the completeness of this paper, we give the theorem and the proof as follows.
\begin{theorem}
\label{theorem: shadow vertex}
Consider LP problem (\ref{eq: old canonical LP 1}-\ref{eq: old canonical LP 2}), and suppose Assumption \ref{assumption: nondegeneracy} holds. Let $x_0$ be a vertex of the feasible set $X$ of (\ref{eq: old canonical LP 2}), and $\Gamma: X \rightarrow span(u,v)$ be the orthogonal projection map from $X$ to $span(u,v)$. The following three conditions are equivalent.
\begin{enumerate}
  \item $x_0$ is a shadow vertex.
  \item The projection of $x_0$ is on the boundary of the projection of $X$, i.e. $\Gamma(x_0)\in \partial\Gamma(X)$.
  \item There exists a vector $w\in\mathbb{R}^n\setminus \{\mathbf{0}\}$ in $span(u,v)$ such that $w^Tx_0=\max_{x\in X} w^Tx$.
\end{enumerate}
\end{theorem}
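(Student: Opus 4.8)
The plan is to prove the cycle of implications $(1)\Rightarrow(2)\Rightarrow(3)\Rightarrow(1)$, everything resting on a single compatibility between orthogonal projection and linear optimization: if $w\in span(u,v)$, then the residual $x-\Gamma(x)$ is orthogonal to the plane and hence to $w$, so $w^Tx=w^T\Gamma(x)$ for every $x$. Consequently $\max_{x\in X}w^Tx=\max_{y\in\Gamma(X)}w^Ty$, and $x_0$ maximizes $w$ over $X$ if and only if $\Gamma(x_0)$ maximizes $w$ over $\Gamma(X)$. Granting this, $(1)\Rightarrow(2)$ is immediate, since a vertex of the polygon $\Gamma(X)$ lies on $\partial\Gamma(X)$ by definition. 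For $(2)\Rightarrow(3)$ I would take a supporting line of the planar convex body $\Gamma(X)$ at the boundary point $\Gamma(x_0)$; its outward normal $w\in span(u,v)\setminus\{0\}$ satisfies $w^Ty\le w^T\Gamma(x_0)$ on $\Gamma(X)$, and the identity turns this into $w^Tx\le w^Tx_0$ for all $x\in X$, which is exactly statement (3).

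The substance of the theorem is $(3)\Rightarrow(1)$. From a maximizing $w\in span(u,v)\setminus\{0\}$ the identity makes $\Gamma(x_0)$ a boundary maximizer of $w$ on $\Gamma(X)$, so it lies in the maximal face, which in a polygon is either a vertex or a full edge. It therefore suffices to exhibit a single $w'\in span(u,v)$ for which $x_0$ is the \emph{unique} maximizer over $X$, since then $\Gamma(x_0)$ is the unique maximizer over $\Gamma(X)$ and hence a vertex. By Assumption \ref{assumption: nondegeneracy} the vertex $x_0$ has exactly $n$ active constraints with linearly independent normals, so Theorem \ref{theorem: optimality condition primal} identifies the normal cone $N_X(x_0)=\{w:w^Tx_0\ge w^Tx\ \forall x\in X\}$ with the full-dimensional simplicial cone these normals generate, and $x_0$ is the unique maximizer for precisely those $w'$ in its interior. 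The task thus reduces to showing that the two-plane $span(u,v)$, which is known to meet this cone along the ray through $w$, in fact meets its interior.

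If $w$ already lies in the interior (all coefficients $\rho_i$ in the representation of Theorem \ref{theorem: optimality condition primal} are strictly positive), perturbing $w$ within the plane keeps it in the cone and the proof is finished. The main obstacle is the boundary case, some $\rho_i=0$, in which $w$ sits on a facet of $N_X(x_0)$ spanned by $n-1$ active normals; here I must exclude the possibility that $span(u,v)$ is tangent to the cone along $\mathbb{R}_+w$, the only situation in which $\Gamma(x_0)$ could fall into the relative interior of an edge. This is exactly where the relaxed non-degeneracy assumption is used: its genericity requirement on $\{A_1,\dots,A_{m+n},u,c\}$, together with the deliberate caveat that no more than $n-1$ probability constraints enter any independent $n$-subset, is designed to prevent $u$ and $c$ from aligning with such a facet, so that the plane must cross it and therefore pierce the interior. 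The one genuinely new point relative to the non-degenerate Lemma~1.2 of \cite{borgwardt2012simplex} is to verify that permitting the $n$ probability constraints to be linearly dependent does not quietly reinstate a tangency; this is the step I would carry out most carefully.
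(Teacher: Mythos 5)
Your overall architecture is sound, and your implications (1)$\Rightarrow$(2) and (2)$\Rightarrow$(3) coincide with the paper's: orthogonality of $x-\Gamma(x)$ to the plane gives the identity $w^Tx=w^T\Gamma(x)$, and a supporting line at the boundary point supplies $w$. For (3)$\Rightarrow$(1) your route is genuinely different from the paper's. You reduce the claim to showing that $span(u,c)$ pierces the interior of the normal cone at $x_0$, which by Theorem \ref{theorem: optimality condition primal} is the simplicial cone on the $n$ active normals, so that $x_0$ becomes the \emph{unique} maximizer of some $w'$ in the plane and $\Gamma(x_0)$ is forced to be a vertex. The paper instead works inside the shadow polygon: assuming $\Gamma(x_0)$ lies in the relative interior of an edge, it perturbs $w$ to $w+\epsilon v'$ and observes that maximality must fail for $\epsilon$ of both signs; since the coefficients $\rho_i+\epsilon\alpha_i$ in the basis of active normals are linear in $\epsilon$, continuity produces two distinct indices $k\neq l$ with $\rho_k=\rho_l=0$, whence $w=\beta_1u+\beta_2c$ lies in the span of only $n-2$ active rows, equation (\ref{eq: proof 1}), contradicting Assumption \ref{assumption: nondegeneracy}. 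The two arguments are dual views of the same phenomenon --- the paper's two vanishing multipliers are exactly your tangency configuration --- and yours, once completed, is arguably cleaner in that it never needs the edge of the polygon.

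As written, however, the proposal has a genuine gap, and it sits precisely at the step you flag as ``the step I would carry out most carefully'': you assert that Assumption \ref{assumption: nondegeneracy} ``is designed to prevent'' tangency but never derive a contradiction from it. Moreover your taxonomy of the boundary case is wrong as stated: ``some $\rho_i=0$'' does not put $w$ on a facet spanned by $n-1$ normals; $w$ may lie on a face of codimension two or more. That case requires no tangency analysis at all --- it is impossible outright, because $w=\beta_1u+\beta_2c=\sum_{i\notin Z}\rho_iA_{\Omega_0^i}^T$ with $|Z|\geq 2$ exhibits a linear dependence within a set of at most $n$ elements of $\{A_1,\ldots,A_{m+n},u,c\}$ containing at most $n-1$ probability rows (every active set contains at most $n-1$ of the last $n$ rows, since those constraints cannot all be active simultaneously), contradicting Assumption \ref{assumption: nondegeneracy}; this is literally the paper's contradiction (\ref{eq: proof 1}). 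For the true facet case $Z=\{l\}$, write $z=\sum_i\zeta_i(z)A_{\Omega_0^i}^T$ in the basis of active normals: near $w$ all $\zeta_i$ with $i\neq l$ are strictly positive, so either $\zeta_l$ is not identically zero on $span(u,c)$, in which case the plane contains points arbitrarily close to $w$ with all $\zeta_i>0$ and hence pierces the interior, or $\zeta_l\equiv 0$ on the plane, in which case $u$ lies in $span\{A_{\Omega_0^i}:i\neq l\}$ --- again an $n$-element dependent subset with at most $n-1$ probability rows, contradicting Assumption \ref{assumption: nondegeneracy}. With these two cases supplied your argument closes, and the role of the relaxed assumption (it is only ever invoked on subsets arising from active sets, which automatically contain at most $n-1$ probability rows) becomes transparent; without them, the proposal names the obstacle but does not overcome it, and the facet-only framing would leave the codimension-$\geq 2$ case unaddressed even after the verification you promise.
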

\begin{proof}
1) $\Rightarrow$ 2): It is clear that if $x_0$ is a shadow vertex, its projection lies in the boundary of $\Gamma(X)$.

2) $\Rightarrow$ 3): Let $\Gamma(x_0)\in \partial \Gamma(X)$. Because of the convexity of the $\Gamma(X)$, there must exist a $w\in span(u,c)\setminus \{\mathbf{0}\}$ such that $w^T\Gamma(x_0) \geq w^T \Gamma(x)$ for any $x\in X$. Meanwhile, we know that $x-\Gamma(x)\bot span(u,c)$, and hence $w^Tx=w^T\Gamma(x)$ for any $x\in X$. Therefore, we have $w^Tx_0\geq w^Tx$ for any $x\in X$.

3) $\Rightarrow$ 1): Now let's assume that 3) is true. Because $x-\Gamma(x)\bot span(u,c)$ and $w\in span(u,c)$, we know that $\Gamma(x_0)$ is the maximal relative to $w$ for $x\in \Gamma(X)$. Therefore, $\Gamma(x_0)$ is in the boundary of the shadow $\Gamma(X)$. Since $\Gamma(X)$ is a two-dimensional polygon, if $\Gamma(x_0)$ is not a vertex, then it must lies inside an edge.

Together with the fact that $\Gamma(x_0)$ is the optimum w.r.t $w$, we know that $w$ is orthogonal to the edge, and there exists a $v\in span(u,c)\setminus \{0\}$ such that $w\bot v$ and $\Gamma(x_0)$ is the maximal relative to $w+\epsilon v$ for any $x\in \Gamma(X)$ if and only if $\epsilon=0$.
Since $x-\Gamma(x)\bot span(u,c)$ and $w+\epsilon v\in span(u,c)$, we know that $x_0$ is also maximal relative to $w+\epsilon v$ if and only if $\epsilon=0$ for any $x\in X$.

Let $\Omega_0$ be the active constraint set when $x=x_0$. Assumption \ref{assumption: nondegeneracy} indicates that there are $n$ elements in $\Omega_0$. Since $x_0$ is maximal relative to $w$, according to Theorem \ref{theorem: optimality condition primal}, $w=\sum_{i=1}^n \rho_i A_{\Omega_0^i}^T\neq 0$ for $\rho\geq 0$. Let $v=\sum_{i=1}^n \alpha_i A_{\Omega_0^i}^T$, and hence $w+\epsilon v=\sum_{i=1}^n (\rho_i+\epsilon \alpha_i)A_{\Omega_0^i}^T$. $x_0$ is not maximal relative to $w+\epsilon v$ for $\epsilon>0$ implies that there exists an $l$ such that $\rho_l+\epsilon \alpha_l<0$ for any $\epsilon>0$, and from the continuity of the function, we see that $\rho_l+0\alpha_l=0$. Similarly, $x_0$ is not maximal relative to $w+\epsilon v$ for $\epsilon<0$ implies that there exists a $k$ such that $\rho_k+\epsilon \alpha_k<0$ for any $\epsilon<0$, and the continuity of the function implies that $\rho_k+0\alpha_k=0$. Moreover, together with the fact that  $\rho_i+\epsilon \alpha_i$ is a linear function of $\epsilon$ for $i=l,k$, we see that $l\neq k$.

Therefore, we know that at least $2$ elements of $\rho$ is $0$, and we have
\begin{align}
  w=\beta_1 u+\beta_2 c=\sum_{i=1,i\neq k,l}^n \rho_i A_{\Omega_0^i}^T. \label{eq: proof 1}
\end{align}
Since constraint $m+1,\ldots,m+n$ cannot be active at the same time, $\Omega_0$ contains at most $n-1$ elements from $\{m+1,m+2,\ldots,m+n\}$. Therefore, equation (\ref{eq: proof 1}) contradicts Assumption \ref{assumption: nondegeneracy}, and $\Gamma(x_0)$ is a vertex of $\Gamma(X)$.
\end{proof}

Theorem \ref{theorem: shadow vertex} implies that if we choose the projection plane to contain the objective vector $c$, then the optimum vertex is a shadow vertex. According to Theorem \ref{theorem: optimality condition primal}, if we construct an auxiliary objective vector $u=\sum_{i=1}^{|\Omega_0|}\rho_i A_{\Omega_0^i}^T\neq 0$ for some non-negative $\rho_i$'s, which is linearly independent of $c$, then the initial vertex $x_0$ is optimal with respect to $u$, and hence $x_0$ is a shadow vertex with respect to $span(u,c)$ according to the shadow vertex condition (\ref{theorem: shadow vertex}).

Now that we have found the initial vertex, the initial active constraint set and the projection plane, it is time to build the searching table. The basic idea of constructing the table is to find a linear combination of active constraint vectors for objective vector $c$, auxiliary objective vector $u$ and all other constraint vectors. Given any active constraint set $\Omega_0$, we construct the searching table in the following way. The first row consists of an $n$-dimensional row vector $\alpha$ and a scalar $Q_c$ which satisfies $c=\sum_{j=1}^n \alpha_j A_{\Omega_0^j}$ and $Q_c=-\sum_{j=1}^n \alpha_j b_{\Omega_0^j}$. The second row consists of an $n$-dimensional row vector $\beta$ and a scalar $Q_u$ which satisfies $u=\sum_{j=1}^n \beta_j A_{\Omega_0^j}$ and $Q_u=-\sum_{j=1}^n \beta_j b_{\Omega_0^j}$. Each row for the next $m+n$ rows consists of an $n$-dimensional row vector $\gamma_i$ and a scalar $\phi_i$ which satisfies $A_i=\sum_{j=1}^n \gamma_{ij}A_{\Omega_0^j}$ and $\phi_i=b_i-\sum_{j=1}^n \gamma_{ij}b_{\Omega_0^j}$. Notice that the objective value is $-Q_c$ and feasibility is implied by non-negativity of $\phi_j$'s. We build the initial table using the following algorithm.

\begin{algorithm}[Initialization]\hfill{}
\label{algorithm: initialization from scrach}
\begin{enumerate}
  \item Find $l$ such that $G_{n,l}=\min_j G_{n,j}$.
  \item Let $\Omega_0=\{l,m+2,\ldots,m+n\}$.
  \item Find $\beta>0$ such that $u=\sum_{j=1}^n \beta_j A_{\Omega_0^j}$ satisfies Assumption \ref{assumption: nondegeneracy}.
  \item Compute $Q_u=-\sum_{j=1}^n \beta_j b_{\Omega_0^j}$.
  \item Compute $\alpha$ such that $c=\sum_{j=1}^n \alpha_j A_{\Omega_0^j}$.
    \begin{align*}
      \alpha_j=\left\{\begin{array}{cl}
                        1,& \hbox{if $j=1$} \\
                        A_{l,j-1},& \hbox{if $j\neq 1$}
                      \end{array}
      \right.
    \end{align*}
  \item Compute $Q_c=-\sum_{j=1}^n \alpha_j b_{\Omega_0^j}$.
  \item Compute $\gamma_i$ such that $A_i=\sum_{j=1}^n \gamma_{ij}A_{\Omega_0^j}, \forall i=1,\ldots,m+n$.
    \begin{align*}
      \gamma_{ij}=&\left\{\begin{array}{cl}
                        1,& \hbox{if $j=1$ and $i\leq m$} \\
                        A_{l,j-1}-A_{i,j-1},& \hbox{if $j\neq 1$ and $i\leq m$.}
                      \end{array}
      \right.\\
      \gamma_{m+1}=&[0\ -1\ \ldots\ -1],\\
      \gamma_{m+j}=&e_j, \forall j=2,\ldots,n.
    \end{align*}
  \item Compute $\phi_i=b_i-\sum_{j=1}^n \gamma_{ij}b_{\Omega_0^j}$, for all $i=1,\ldots,m+n$.
  \item Record the initial table $table(0)$ and the initial active constraint set $\Omega_0$.
\end{enumerate}
\end{algorithm}

\subsection{Pivot step}
Suppose the current shadow vertex is $x_t$, and the corresponding table and active constraint set $\Delta_t$ are given. We first check whether $x_t$ is the optimum vertex. If not, we search for the adjacent shadow vertex with a larger objective value (if exists). This is called 'taking a pivot step'.

According to Theorem \ref{theorem: optimality condition primal}, if $\alpha_i\geq 0$ for all $i=1,\ldots,n$, then $x_t$ is the optimum vertex, and the search ends. Otherwise, we need to find an adjacent shadow vertex with a larger objective value. To this end, the following lemma is useful.

\begin{lemma}[Lemma 1.4 in \cite{borgwardt2012simplex}]
\label{lemma: pivot direction}
Consider LP problem (\ref{eq: old canonical LP 1}-\ref{eq: old canonical LP 2}). Suppose $u$ is linearly independent with $c$. Let $x_0, x_1,\ldots,x_t$ be the maximal vertices with respect to $w_0^T x, w_1^Tx,\ldots,w_t^Tx$, where $w_i\in span (u,c)\setminus \{0\}$, $w_0=u$, and $arc(w_i,c)>arc(w_{i+1},c)$ for $i=0,\ldots,t-1$. Then
\begin{align*}
 c^T x_i < c^Tx_{i+1}, \forall i=0,\ldots,t-1.
\end{align*}
\end{lemma}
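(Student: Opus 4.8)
The plan is to exploit the fact that all the weight vectors $w_0,\dots,w_t$ and the objective $c$ lie in the common two-dimensional plane $\Pi=span(u,c)$. This lets me write $w_{i+1}$ as a nonnegative combination of $c$ and $w_i$ and then read off the conclusion from the two defining optimality inequalities, with no explicit reference to the projection $\Gamma$. First I would record the geometric fact that drives the proof: since the $w_i$ are obtained by rotating $u$ toward $c$ inside $\Pi$ with $arc(w_i,c)$ strictly decreasing and staying on one side of $c$, the direction $w_{i+1}$ lies strictly between $c$ and $w_i$. Hence $w_{i+1}$ belongs to the convex cone generated by $c$ and $w_i$, so there are scalars $a>0$ and $b\ge 0$ with
\[
w_{i+1}=a\,c+b\,w_i .
\]
Here $a>0$ because $arc(w_{i+1},c)<arc(w_i,c)$ forbids $w_{i+1}$ from being parallel to $w_i$ (note $w_i$ and $c$ are linearly independent for $i\le t-1$, as $arc(w_i,c)>0$), while $b\ge0$, with $b=0$ exactly when $w_{i+1}$ is parallel to $c$.

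With this decomposition the inequality is almost immediate. Since $x_{i+1}$ maximizes $w_{i+1}^T x$ over the feasible set and $x_i$ is feasible, we have $w_{i+1}^T x_{i+1}\ge w_{i+1}^T x_i$; substituting $w_{i+1}=a c+b w_i$ and rearranging gives
\[
a\,\bigl(c^T x_{i+1}-c^T x_i\bigr)\ge b\,\bigl(w_i^T x_i-w_i^T x_{i+1}\bigr).
\]
Because $x_i$ maximizes $w_i^T x$, the factor $w_i^T x_i-w_i^T x_{i+1}$ is nonnegative, and $b\ge0$, so the right-hand side is nonnegative; dividing by $a>0$ yields $c^T x_{i+1}\ge c^T x_i$.

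For the \emph{strict} inequality I would invoke Assumption \ref{assumption: nondegeneracy}: non-degeneracy makes the maximizer of every $w\in\Pi\setminus\{0\}$ a unique vertex, and the consecutive shadow vertices $x_i$ and $x_{i+1}$ are distinct. If $b>0$, uniqueness of the $w_i$-maximizer gives $w_i^T x_i>w_i^T x_{i+1}$, so the right-hand side above is strictly positive and $c^T x_{i+1}>c^T x_i$. If $b=0$, then $w_{i+1}$ is parallel to $c$, so $x_{i+1}$ is the unique $c$-maximizer, and $x_i\ne x_{i+1}$ forces $c^T x_{i+1}>c^T x_i$ directly. In either case the claim follows, and Theorem \ref{theorem: optimality condition primal} is what underwrites the uniqueness statements.

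The step I expect to be the main obstacle is the first one: justifying that $w_{i+1}$ sits in the cone $\{\alpha c+\beta w_i:\alpha,\beta\ge0\}$ with a strictly positive $c$-coefficient. This is precisely where the hypothesis ``$arc(w_i,c)$ strictly decreasing'' must be read as a genuine monotone rotation of $w$ from $u$ toward $c$ on one side of $c$ (as is produced by the pivot steps), rather than as an unsigned angle that might jump across $c$; without the same-side interpretation the decomposition, and indeed the conclusion, can fail. The remaining work—the algebraic rearrangement and the two uniqueness arguments for strictness—is routine given Theorem \ref{theorem: optimality condition primal} and Assumption \ref{assumption: nondegeneracy}.
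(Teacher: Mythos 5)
The paper never proves this lemma---it is imported verbatim as Lemma~1.4 of \cite{borgwardt2012simplex}---so your argument must stand on its own. Its first half does: the decomposition $w_{i+1}=a\,c+b\,w_i$ with $a>0$, $b\geq 0$ (valid under the one-sided-rotation reading of $arc(w_i,c)>arc(w_{i+1},c)$, which you correctly flag as essential), combined with the two optimality inequalities $w_{i+1}^Tx_{i+1}\geq w_{i+1}^Tx_i$ and $w_i^Tx_i\geq w_i^Tx_{i+1}$, cleanly yields the weak inequality $c^Tx_{i+1}\geq c^Tx_i$. That is the standard route and is fine.

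The genuine gap is in your strictness step. The claim that Assumption~\ref{assumption: nondegeneracy} ``makes the maximizer of every $w\in span(u,c)\setminus\{0\}$ a unique vertex'' is false: whenever $w$ is orthogonal to an edge of the shadow polygon $\Gamma(X)$, an entire edge of $X$ (in particular two vertices) is maximal, and no general-position assumption on $(A,b)$ can remove these finitely many tie directions from the plane. Worse, these are exactly the vectors the paper feeds into this lemma: $w_t=w(\mu_t)$ with $\mu_t$ the \emph{smallest} $\mu$ for which $x_t$ is maximal, so $x_{t-1}$ and $x_t$ are simultaneously maximal with respect to $w_t$, and uniqueness fails precisely where the lemma is invoked. (Theorem~\ref{theorem: optimality condition primal} cannot ``underwrite'' uniqueness either: it asserts existence of nonnegative Farkas multipliers, nothing about uniqueness of maximizers.) Relatedly, the distinctness $x_i\neq x_{i+1}$ you use is not a consequence of the stated hypotheses: two directions in the interior of the same normal cone have the same unique maximizer---take the square $[0,1]^2$ with $c=(1,0)$ and $w_i,w_{i+1}$ at angles $80^\circ$ and $70^\circ$, both maximized only at $(1,1)$, where $c^Tx_i=c^Tx_{i+1}$---so as literally stated the lemma is false, and strictness cannot be proved without importing Borgwardt's implicit algorithmic hypotheses that the $x_i$ are successive \emph{distinct} shadow vertices and that $x_{i+1}$ is \emph{not} maximal with respect to $w_i$. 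With that reading your $b>0$ case repairs easily ($w_i^Tx_i>w_i^Tx_{i+1}$ follows from non-maximality of $x_{i+1}$ at $w_i$, not from uniqueness), but the justification as written does not hold.
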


With Lemma \ref{lemma: pivot direction}, we build a vector $w(\mu)=u+\mu c$. Let $w_t=w(\mu_t)$ where $\mu_t$ is the smallest non-negative $\mu$ such that the current shadow vertex $x_t$ is maximal with respect to $w(\mu_t)^Tx$. If $t=0$, then $\mu_t=0$. The basic idea of pivot is to increase $\mu$ from $\mu_t$ to $\mu_{t+1}$ such that $x_t$ won't be the maximum with respect to $w(\mu_{t+1}+\epsilon)$ for any $\epsilon>0$, and then find the next shadow vertex $x_{t+1}$ which is maximal with respect to $w_{t+1}=w(\mu_{t+1})$. Since $arc(w_t,c)>arc(w_{t+1},c)$, we assure that $x_{t+1}$ has a larger objective value than $x_t$ according to Lemma \ref{lemma: pivot direction}.

Now, let us take a closer look on how the pivot step is done. First, we rewrite $w(\mu)=\sum_{i=1}^n (\beta_i+\mu \alpha_i)A_{\Omega_t^i}$. Since $x_t$ is not maximal with respect to $c$, there must exist an $i\in \{1,\ldots,n\}$ such that $\alpha_i<0$, and $\beta_i+\mu\alpha_i$ decreases to negative as $\mu$ increases. If $\beta_k+\mu\alpha_k$ first decreases to $0$ for some $\mu=\mu_{t+1}$, then we shall replace $\Omega_t^k$ with another constraint out of $\Omega_t$ such that the new vertex $x_{t+1}$ formed by the new active constraint set $\Omega_{t+1}$ is the maximal relative to $w(\mu_{t+1}+\epsilon)=u+(\mu_{t+1}+\epsilon) c $ where $\epsilon$ is an arbitrary small positive number. In this way, the moving out active constraint $\Omega_t^k$ is figured out where
\begin{align}
  k=\argmin_{i\in\{1,\ldots,n\},\alpha_i<0}-\frac{\beta_i}{\alpha_i}. \label{eq: moving out index}
\end{align}

The above analysis elucidates which constraint in the active constraint set should be replaced to increase the objective value. Next, we will discuss which inactive constraint should be moved into the active constraint set to guarantee feasibility. First of all, the moving-in constraint $i$ should satisfy $\gamma_{ik}<0$. If $\gamma_{ik}\geq 0$ for all $i$, then $A_k$ and all $A_j$'s for $j\notin \Omega$ lies in the same half space divided by the hyperplane through points $\mathbf{0}, A_{\Omega_t^1},\ldots,A_{\Omega_t^{k-1}},A_{\Omega_t^{k+1}},\ldots,A_{\Omega_t^n}$, and $c$ lies on the other half space. Therefore, $c$ doesn't lie in the convex cone of $A_1,\ldots,A_{m+n}$, which means that the LP problem has no solution, and the search stops. If there exists at least one inactive constraint $i$ such that $\gamma_{ik}<0$, let us suppose that we choose constraint $l$ satisfying $\gamma_{lk}<0$ as the moving in constraint. The table will be updated as follows \cite{borgwardt2012simplex}.
\begin{align}
  \alpha_j=&\left\{\begin{array}{cl}
    \alpha_j-\alpha_k\frac{\gamma_{lj}}{\gamma_{lk}} & \hbox{if $j\neq k$}\\
    \frac{\alpha_k}{\gamma_{lk}} & \hbox{if $j=k$}
  \end{array}\right. \label{eq: tabular update 1}\\
  \beta_j=&\left\{\begin{array}{cl}
    \beta_j-\beta_k\frac{\gamma_{lj}}{\gamma_{lk}} & \hbox{if $j\neq k$}\\
    \frac{\beta_k}{\gamma_{lk}} & \hbox{if $j=k$}
    \end{array}
  \right.\label{eq: tabular update 2}\\
  \gamma_{ij}=&\left\{\begin{array}{cl}
    \gamma_{ij}-\gamma_{ik}\frac{\gamma_{lj}}{\gamma_{lk}} & \hbox{if $j\neq k$ and $i\notin \Omega_{t+1}$}\\
    \frac{\gamma_{ik}}{\gamma_{lk}} & \hbox{if $j=k$ and $i\notin \Omega_{t+1}$}
    \end{array}
  \right.\label{eq: tabular update 3}\\
  \gamma_i=&e_k,\hbox{if $i=l$}. \label{eq: tabular update 4}
\end{align}
and
\begin{align}
  \phi_i=&
  \left\{\begin{array}{cl}
    0 & \hbox{if $i\in \Omega_{t+1}$}\\
    \phi_i-\frac{\gamma{ik}}{\gamma_{lk}}\phi_l & \hbox{if $i\notin\Omega_{t+1}$}
  \end{array}
  \right.\label{eq: tabular update 5}\\
  Q_c=& Q_c-\frac{\phi_l \alpha_k}{\gamma_{lk}} \label{eq: tabular update 6}\\
  Q_u=& Q_u-\frac{\phi_l \beta_k}{\gamma_{lk}} \label{eq: tabular update 7}
\end{align}
As mentioned before, the non-negativity of $\phi_i$'s implies feasibility. Therefore, according to (\ref{eq: tabular update 5}) and the analysis above, the moving in constraint $l$ is chosen such that
\begin{align}
  l=\argmax_{i\notin\Omega_t,\gamma_{ik}<0}\frac{\phi_{i}}{\gamma_{ik}}, \label{eq: moving in constraint}
\end{align}
and the active constraint set is updated to $\Omega_{t+1}=\{\Omega_t^1,\ldots,\Omega_t^{k-1},l,\Omega_t^{k+1},\ldots,\Omega_t^n\}$.

The pivot algorithm given the current active constraint set $\Omega_t$ is provided as follows.
\begin{algorithm}[Pivot]\hfill{}
  \label{algorithm: pivot}
 \begin{enumerate}
   \item If $\alpha_i\geq 0$ for all $i$, then the vertex associated with $\Omega_t$ is the optimum, and $v=-Q_c$. Go to step 8).
   \item Find the moving-out constraint $\Omega_t^k$, where $k$ is given in (\ref{eq: moving out index}).
   \item If $\gamma_{ik}\geq 0$ for all $i\notin \Omega_t$, then there is no solution. Go to step 8).
   \item Find the moving-in constraint $l$ satisfying (\ref{eq: moving in constraint}).
   \item Update and record $\Omega_{t+1}=\{\Omega_t^1,\ldots,\Omega_t^{k-1},l,\Omega_t^{k+1},\ldots,\Omega_t^n\}$.
   \item Update the table according to (\ref{eq: tabular update 1}-\ref{eq: tabular update 7}), and record it as $table(t+1)$.
   \item Return to step 1).
   \item End.
 \end{enumerate}
\end{algorithm}

Shadow vertex simplex method has a polynomial computational complexity \cite{borgwardt2012simplex}. To be more specific, let $\tau$ be the number of pivot steps, and $T$ be the number of shadow vertices. If $A_1,A_2,\ldots,A_m$ and $c$ are independently, identically, and symmetrically distributed under rotation, then $E(t)\leq E(T)\leq \rho m^{\frac{1}{n-1}} n^3$ for some positive constant $\rho$. 
%


\section{Player 1's Unchanging security strategy}
\label{sec: unchanging strategy}
It is not necessary to run the shadow vertex simplex method every time player 2 adds a new action since player 2's new action may have no influence on player 1's security strategy and the game value. Comparing the old LP (\ref{eq: old LP 1}-\ref{eq: old LP 4}) with the new LP (\ref{eq: new LP 1}-\ref{eq: new LP 5}), we see that the new LP adds a new constraint (\ref{eq: new LP 3}) to the old LP. Geometrically, a new constraint means a new cut of the existing feasibility set. If the new constraint does not cut the optimum vertex off, i.e. the optimum vertex satisfies the new constraint, then the optimum vertex remains the same, and player 1's security strategy does not change. To formally state the above analysis, we first provide the canonical form of the new LP (\ref{eq: new LP 1}-\ref{eq: new LP 5}).

As mentioned before, the canonical form of (\ref{eq: new LP 1}-\ref{eq: new LP 5}) takes the same form as in (\ref{eq: old canonical LP 1}-\ref{eq: old canonical LP 2}) except that the payoff matrix is replaced with $[G\ g]$. To make the discussion clear, we provide the canonical form of (\ref{eq: new LP 1}-\ref{eq: new LP 5}) as below.
\begin{align}
  V_{new}=&\max_{x\in\mathbb{R}^n} c^T x \label{eq: new canonical form 1}\\
  s.t.& \hat{A}x \leq \hat{b},\label{eq: new canonical form 2}
\end{align}
where $$\hat{A}=[A^T_{1:m}\ [g_n-g_1\ g_n-g_2\ \ldots\ g_n-g_{n-1}\ 1]^T A^T_{m+1:m+n}]^T,$$ and $$\hat{b}=[b_{1:m}^T\ g_n\ b_{m+1:m+n}^T]^T.$$ The new constraint in the new LP (\ref{eq: new LP 1}-\ref{eq: new LP 5}) is transformed to
\begin{align}
[g_n-g_1\ g_n-g_2\ \ldots\ g_n-g_{n-1}\ 1]x \leq g_n .\label{eq: new constraint}
\end{align}

\begin{theorem}
\label{theorem: optimality reserved}
Suppose Assumption \ref{assumption: nondegeneracy} holds. Let $x^*$ be the optimal solution of old LP (\ref{eq: old canonical LP 1}-\ref{eq: old canonical LP 2}). The old optimal solution $x^*$ remains optimal in the new LP (\ref{eq: new canonical form 1}-\ref{eq: new canonical form 2}) if and only if $x^*$ satisfies the new constraint (\ref{eq: new constraint}).
\end{theorem}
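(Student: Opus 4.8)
The plan is to exploit the single structural fact that passing from the old LP to the new LP only \emph{appends} one constraint, so the new feasible region is contained in the old one. Writing $X$ for the feasible set of (\ref{eq: old canonical LP 1}-\ref{eq: old canonical LP 2}) and $\hat{X}$ for that of (\ref{eq: new canonical form 1}-\ref{eq: new canonical form 2}), I would first read off from the explicit forms of $\hat{A}$ and $\hat{b}$ that the rows of $\hat{A}x\leq\hat{b}$ are exactly the rows of $Ax\leq b$ together with the single extra row (\ref{eq: new constraint}). Hence $\hat{X}=X\cap\{x:\text{(\ref{eq: new constraint}) holds}\}$, and in particular $\hat{X}\subseteq X$ while no old constraint is disturbed. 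This containment drives both directions of the equivalence.

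For the ``only if'' direction I would argue by contraposition. If $x^*$ violates (\ref{eq: new constraint}), then $x^*\notin\hat{X}$, so $x^*$ is not even feasible for the new LP and therefore cannot be optimal for it. Equivalently, optimality in the new LP presupposes feasibility, and since $x^*$ already satisfies all the old constraints, feasibility for the new LP is precisely satisfaction of (\ref{eq: new constraint}). This direction uses nothing beyond the definition of the feasible region.

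For the ``if'' direction, suppose $x^*$ satisfies (\ref{eq: new constraint}). Since $x^*\in X$ is already feasible for the old LP and now also meets the appended constraint, we have $x^*\in\hat{X}$, so $x^*$ is feasible for the new LP. To see it is optimal there, I would take an arbitrary $\hat{x}\in\hat{X}$: because $\hat{X}\subseteq X$, this $\hat{x}$ is feasible for the old LP, and optimality of $x^*$ over the larger set $X$ yields $c^T\hat{x}\leq c^Tx^*$. As $\hat{x}$ ranges over all of $\hat{X}$ and $x^*\in\hat{X}$, this shows $x^*$ maximizes $c^Tx$ over $\hat{X}$, so it remains optimal and $V_{new}=V$.

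I do not anticipate a genuine obstacle; the content is simply the monotonicity of a maximization under shrinking of the feasible set. The only points needing care are verifying the subset relation directly from the stated $\hat{A},\hat{b}$ (to be certain that only a new row is appended and none of the original rows is modified), and observing that Assumption \ref{assumption: nondegeneracy}, though maintained throughout the surrounding development and used to make ``the'' optimal vertex well defined, is not actually invoked in this equivalence—the argument holds verbatim for any LP to which a single constraint is added.
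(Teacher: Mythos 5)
Your proposal is correct and matches the paper's own argument: the paper likewise proves the ``only if'' direction from feasibility alone, and the ``if'' direction by combining $V_{new}\leq V$ (from the new feasible set being contained in the old one) with $x^*$ being feasible in the new LP, which is exactly your monotonicity argument stated in terms of the objective coordinate $x_n$. Your side observation that Assumption \ref{assumption: nondegeneracy} is never actually invoked is also accurate, as the paper's proof uses nothing beyond the containment of feasible sets.
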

\begin{proof}
It is easy to see that if $x^*$ is the optimal solution of (\ref{eq: new canonical form 1}-\ref{eq: new canonical form 2}), then $x^*$ satisfies the new constraint. For the opposite direction, with one more constraint, the feasible set of new LP (\ref{eq: new canonical form 1}-\ref{eq: new canonical form 2}) is included in the feasible set of old LP (\ref{eq: old canonical LP 1}-\ref{eq: old canonical LP 2}). So we have $V_{new}\leq V$, in other words, $x^+_n \leq x^*_n$, where $x^+$ indicates the optimal solution of (\ref{eq: new canonical form 1}-\ref{eq: new canonical form 2}). Meanwhile, if $[g_n-g_1\ g_n-g_2\ \ldots\ g_n-g_{n-1}\ 1]x^* \leq g_n$, it is easy to see that $x^*$ is a feasible solution of (\ref{eq: new canonical form 1}-\ref{eq: new canonical form 2}), and we have $x^*_n \leq x^+_n$. Therefore, $x^*_n=x^+_n$, and $x^*$ is the optimal solution of (\ref{eq: new canonical form 1}-\ref{eq: new canonical form 2}).
\end{proof}

Since the re-computation is not always necessary, we are interested in the probability of the re-computation. 
\begin{assumption}
  \label{assp: unique optimum}
The old LP problem (\ref{eq: old canonical LP 1}-\ref{eq: old canonical LP 2}) has a unique optimal solution.
\end{assumption}

\begin{assumption}
  \label{assp: equal probability active set}
Consider the canonical form of the new LP problem (\ref{eq: new canonical form 1}-\ref{eq: new canonical form 2}). Suppose Assumption \ref{assumption: nondegeneracy} holds. Any $n$-element subset $\Delta$ of $\{1,\ldots,m,m+1,m+2\ldots,m+1+n\}$ such that $A_{\Delta^1},\ldots,A_{\Delta^n}$ are linearly independent has the same probability to be the active constraint set with respect to the optimum of the new LP problem (\ref{eq: new canonical form 1}-\ref{eq: new canonical form 2}), i.e.
\begin{align}
  P(\hbox{$\Delta$ is an active constraint set})=\frac{1}{\binom{m+1+n}{n}-1}, \label{eq: equal possibility}
\end{align}
where $|\Delta|=n$ and $A_{\Delta^1},\ldots,A_{\Delta^n}$ are linearly independent
\end{assumption}
The relaxed non-degeneracy assumption \ref{assumption: nondegeneracy} indicates that there are $n$ elements in every active constraint set, and any $n$ rows, except the last $n$ rows, of $A$ are linearly independent. Therefore, there are $\binom{m+1+n}{n}-1$ possible active constraint sets, and since each set has the same probability to be active, we have equation (\ref{eq: equal possibility}).

\begin{lemma}
  \label{lem: active new constraint}
Suppose Assumption \ref{assumption: nondegeneracy}, \ref{assp: unique optimum} and \ref{assp: equal probability active set} hold. The unique optimal solution $x^*$ of the old LP (\ref{eq: old canonical LP 1}-\ref{eq: old canonical LP 2}) is not optimal any more in the new LP (\ref{eq: new canonical form 1}-\ref{eq: new canonical form 2}) if and only if the new constraint (\ref{eq: new constraint}) is an active constraint for any optimum of the new LP (\ref{eq: new canonical form 1}-\ref{eq: new canonical form 2}), i.e. $[g_n-g_1\ g_n-g_2\ \ldots\ g_n-g_{n-1}\ 1]x^+ = g_n $, where $x^+$ is any optimum of the new LP (\ref{eq: new canonical form 1}-\ref{eq: new canonical form 2}).
\end{lemma}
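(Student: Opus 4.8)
The plan is to reduce the statement to Theorem~\ref{theorem: optimality reserved} and then establish a clean geometric equivalence. Write $h(x)$ for the left-hand side of the new constraint (\ref{eq: new constraint}), so that the new LP adds exactly the inequality $h(x)\le g_n$ to the old feasible set. By Theorem~\ref{theorem: optimality reserved}, the assertion ``$x^*$ is no longer optimal in the new LP'' is equivalent to ``$x^*$ violates the new constraint,'' i.e.\ $h(x^*)>g_n$. Hence it suffices to show that $h(x^*)>g_n$ holds if and only if $h(x^+)=g_n$ for every optimum $x^+$ of the new LP, and I would prove the two implications separately.

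For the forward implication, assume $h(x^*)>g_n$ and let $x^+$ be any optimum of the new LP; since $h(x^+)\le g_n$ by feasibility, it remains only to exclude the strict case $h(x^+)<g_n$. First I note that $x^+\neq x^*$, because the two points lie on opposite sides of the new constraint, and that $x^+$ is feasible for the old LP, so Assumption~\ref{assp: unique optimum} forces $c^Tx^+<c^Tx^*=V$ and therefore $V_{new}=c^Tx^+<V$. Now consider the segment $x_\lambda=(1-\lambda)x^+ +\lambda x^*$. By convexity every old constraint remains satisfied along the segment, and since $h(x^+)<g_n<h(x^*)$ there is a threshold $\lambda_0\in(0,1)$ with $h(x_\lambda)\le g_n$ for all $\lambda\in[0,\lambda_0]$, so $x_\lambda$ is feasible for the new LP on that range. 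But $c^Tx_\lambda=(1-\lambda)c^Tx^+ +\lambda c^Tx^*>c^Tx^+$ for every $\lambda\in(0,\lambda_0]$, contradicting the optimality of $x^+$. Hence $h(x^+)=g_n$.

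For the reverse implication, assume $h(x^+)=g_n$ for every optimum $x^+$, and suppose for contradiction that $h(x^*)\le g_n$. Then $x^*$ satisfies the new constraint, so by Theorem~\ref{theorem: optimality reserved} it is optimal in the new LP and is therefore itself one of the admissible $x^+$, giving $h(x^*)=g_n$. However, by Assumption~\ref{assumption: nondegeneracy} the vertex $x^*$ already has exactly $n$ active constraints, all drawn from the old constraint rows, and these remain active in the enlarged system $\hat A$; the general-position clause of Assumption~\ref{assumption: nondegeneracy} applied to $\hat A$ then forbids any further constraint from being active at the same point, so $h(x^*)\neq g_n$, a contradiction. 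Therefore $h(x^*)>g_n$, which by Theorem~\ref{theorem: optimality reserved} means $x^*$ is no longer optimal in the new LP.

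I expect the main obstacle to be the forward implication, specifically the justification of the strict inequality $V_{new}<V$: this is exactly where Assumption~\ref{assp: unique optimum} is indispensable, since if the old LP admitted multiple optima one could have $V_{new}=V$ and the improving-direction argument along $x_\lambda$ would collapse. The reverse implication is comparatively routine once one observes that an active new constraint at $x^*$ would create $n+1$ active constraints there, violating the relaxed non-degeneracy of Assumption~\ref{assumption: nondegeneracy}.
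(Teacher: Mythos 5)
Your proof is correct and takes essentially the same route as the paper's: the forward implication is the identical uniqueness-plus-convex-combination contradiction (the paper picks the boundary point of the segment $\alpha x^+ + (1-\alpha)x^*$ where the new constraint becomes active, you use nearby interior points, an immaterial difference), and your reverse implication simply spells out the non-degeneracy/general-position argument that the paper compresses into the remark that the active set of $x^*$ contains only old constraints.
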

\begin{proof}
Suppose $x^*$ is not an optimum of the new LP. According to Theorem \ref{theorem: optimality reserved}, we have $\hat{A}_{m+1}x^*>\hat{b}_{m+1}$. If there exists an optimum $x^+ \neq x^*$ of the new LP such that $\hat{A}_{m+1}x^+<\hat{b}_{m+1}$, then there must exist an $x=\alpha x^+ +(1-\alpha) x^*$ such that $\hat{A}_{m+1}x=\hat{b}_{m+1}$. It is easy to verify that $x$ is a feasible solution of the new LP. Since $x^+$ is an optimum of the new LP, we have $c^Tx^+ \geq c^T x$. Meanwhile, $x^*$ is the unique optimum of old LP implies that $c^Tx^+<c^Tx^*$, and we have $c^Tx=\alpha c^Tx^+ +(1-\alpha) c^Tx^*>c^T x^+$, which is a contraction. Therefore, the new constraint is active for any optimum of the new LP.

Suppose the new constraint is active for any optimum of new LP (\ref{eq: new canonical form 1}-\ref{eq: new canonical form 2}). The old optimum $x^*$ cannot be an optimum of the new LP, since its corresponding active constraint set only contains the old constraints.
\end{proof}

\begin{theorem}
  \label{thm: unchanging probability}
Consider the canonical form of the new LP problem (\ref{eq: new canonical form 1}-\ref{eq: new canonical form 2}). Suppose Assumption \ref{assumption: nondegeneracy}, \ref{assp: unique optimum} and \ref{assp: equal probability active set} hold. The probability that player 1's security strategy changes is $\frac{n}{m+1+n-(m+1)/\binom{m+n}{n}}$.
\end{theorem}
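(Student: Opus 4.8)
The plan is to reduce the event ``player 1's security strategy changes'' to a purely combinatorial counting problem about active constraint sets, and then to evaluate the resulting ratio of binomial coefficients. By Lemma \ref{lem: active new constraint}, under Assumptions \ref{assumption: nondegeneracy}, \ref{assp: unique optimum} and \ref{assp: equal probability active set}, the strategy changes if and only if the new constraint---which is constraint $m+1$ in the canonical form (\ref{eq: new canonical form 1}-\ref{eq: new canonical form 2})---belongs to the active constraint set $\Delta$ of the optimum of the new LP. Hence the quantity we seek is simply $P(m+1 \in \Delta)$.

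By Assumption \ref{assp: equal probability active set}, each admissible $n$-element active set is equally likely, with probability $1/(\binom{m+1+n}{n}-1)$, so it suffices to count admissible sets that contain the index $m+1$. First I would note that the only inadmissible $n$-subset is the one consisting of all $n$ probability constraints $\{m+2,\ldots,m+1+n\}$, since by Assumption \ref{assumption: nondegeneracy} any $n$ rows using at most $n-1$ probability constraints are linearly independent; this single exclusion accounts for the $-1$ in the denominator. The key observation is then that every $n$-subset containing $m+1$ automatically uses at most $n-1$ probability constraints and is therefore admissible. Consequently the number of admissible active sets containing $m+1$ equals the total number of $n$-subsets of $\{1,\ldots,m+1+n\}$ containing $m+1$, namely $\binom{m+n}{n-1}$, obtained by choosing the remaining $n-1$ indices from the other $m+n$ constraints.

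Combining the two counts gives
\begin{align*}
  P(\text{strategy changes}) = \frac{\binom{m+n}{n-1}}{\binom{m+1+n}{n}-1}.
\end{align*}
The remaining work is algebraic: I would rewrite the numerator via $\binom{m+n}{n-1}=\frac{n}{m+1}\binom{m+n}{n}$ and the denominator via the identity $\binom{m+1+n}{n}=\frac{m+1+n}{m+1}\binom{m+n}{n}$, clear the common factor $\frac{1}{m+1}$, and then divide numerator and denominator by $\binom{m+n}{n}$. After this cancellation the expression collapses to $\frac{n}{\,m+1+n-(m+1)/\binom{m+n}{n}\,}$, matching the claim.

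I expect the main obstacle to lie in the combinatorial step rather than the algebra. One must argue precisely that inclusion of the new-constraint index $m+1$ forces admissibility of the entire $n$-subset, so that none of the $\binom{m+n}{n-1}$ candidate subsets has to be discarded, while simultaneously identifying the unique inadmissible subset responsible for the $-1$ in the denominator and checking that it does not contain $m+1$. Verifying that these two facts are mutually consistent with the indexing fixed in Assumption \ref{assp: equal probability active set} is the delicate part; once the counts $\binom{m+n}{n-1}$ and $\binom{m+1+n}{n}-1$ are pinned down, the final simplification is a routine manipulation of binomial coefficients.
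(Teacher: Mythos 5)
Your proposal is correct and takes essentially the same route as the paper's own proof: both invoke Lemma \ref{lem: active new constraint} to identify the event with constraint $m+1$ being in the optimal active set, count the $\binom{m+n}{n-1}$ active sets containing $m+1$, each with probability $\frac{1}{\binom{m+1+n}{n}-1}$ under Assumption \ref{assp: equal probability active set}, and simplify the ratio to $\frac{n}{m+1+n-(m+1)/\binom{m+n}{n}}$. Your additional care in pinning down the unique inadmissible subset (the $n$ probability constraints, which cannot all be active) and verifying that every subset containing $m+1$ is admissible merely makes explicit what the paper states informally after Assumption \ref{assp: equal probability active set}.
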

\begin{proof}
Player 1's security strategy changes if and only if the optimum $x^*$ of the old LP (\ref{eq: old canonical LP 1}-\ref{eq: old canonical LP 2}) is not optimal any more in the new LP. According to Lemma \ref{lem: active new constraint}, it is the same to say that the new constraint $\hat{A}_{m+1}x\leq \hat{b}_{m+1}$ is an active constraint in the canonical form of the new LP problem. We have $P(\hbox{player 1's security strategy changes})$ $=P(\hbox{constraint $m+1$ is an active constraint})$.
\begin{align*}
  &P(\hbox{constraint $m+1$ is an active constraint})\\
  =& \sum_{\{i_1,\ldots,i_{n-1}\}\in\{1,\ldots,m,m+2,\ldots,m+1+n\}} P(\{i_1,\ldots,i_{n-1},m+1\}\hbox{ is the active constraint set})\\
  =&\binom{m+n}{n-1}\frac{1}{\binom{m+1+n}{n}-1}\\
  =&\frac{n}{m+1+n-(m+1)/\binom{m+n}{n}}.
\end{align*}
\end{proof}
Theorem \ref{thm: unchanging probability} implies that as $m$ grows, the probability to recompute the security strategy decreases. This is because as the size of constraints grows, the feasible set shrinks, and it is less possible for the new constraint to cut the optimum vertex off.

\section{Iterative shadow vertex simplex method}
\label{sec: iterative s.v. method}
If the old security strategy is not optimal any more in the new game, it is not always necessary to start the optimum search from the initial vertex. Suppose that we have constructed a complete shadow vertex path $\Pi=\{x_0,x_1,\ldots,x_{\tau-1},x^*\}$ when searching for the optimum of the old LP (\ref{eq: old canonical LP 1}-\ref{eq: old canonical LP 2}). When the new constraint (\ref{eq: new constraint}) is added, we can test the feasibility of the visited shadow vertices one by one starting from $x_0$ until we find one that fails the feasibility test. That last vertex should have the largest objective value, and will be chosen as the starting point of the new search. This idea is formulated as the iterative shadow vertex simplex method as follows.

\begin{algorithm}[Iterative shadow vertex algorithm]\hfill{}
  \label{algorithm: iterative shadow vertex method}
  \begin{enumerate}
    \item If $x^*$ satisfies (\ref{eq: new constraint}), then $x^*$ is the optimal solution. Go to step 6).
    \item Otherwise, set the new path $\Pi_{new}=\emptyset$.
    \item From $i=0$ to $\tau-1$, test whether $x_i$ satisfies the new constraint (\ref{eq: new constraint}). If yes, add $x_i$ into the new path $\Pi_{new}$, and update $table(i)$ by inserting a new row vector $\gamma_{m+1}$ and a new scalar $\phi_{m+1}$ with respect to the new constraint. Otherwise, stop the test.
    \item If $\Pi_{new}=\emptyset$, then run standard shadow vertex algorithm, i.e. run the initialization algorithm \ref{algorithm: initialization from scrach}, and then the pivot algorithm \ref{algorithm: pivot}. Add all visited shadow vertices including the optimal vertex into $\Pi_{new}$. Update $\Pi=\Pi_{new}$. Go to step 6).
    \item Otherwise, choose the last element in $\Pi_{new}$ as the initial shadow vertex, and run the pivot algorithm \ref{algorithm: pivot}. Add all visited shadow vertices including the optimal vertex to $\Pi_{new}$. Update $\Pi=\Pi_{new}$. Go to step 6).
    \item End.
  \end{enumerate}
\end{algorithm}

Since the iterative shadow vertex simplex method usually starts the search from a shadow vertex with a larger objective value, we expect that the iterative shadow vertex simplex method has less pivot steps and lower computational complexity than the shadow vertex simplex method. To analyze the computational complexity of the former, we start from the following lemma stating that if the new constraint does not cut the whole path off, then with the iterative shadow vertex method, the new constraint stays in the active constraint set until the optimum vertex is found.

\begin{lemma}
\label{lem: m+1 is always active}
Suppose Assumption \ref{assumption: nondegeneracy} holds, and the optimum $x^*$ of the old LP (\ref{eq: old canonical LP 1}-\ref{eq: old canonical LP 2}) violates the new constraint (\ref{eq: new constraint}). Let $\Pi=\{x_0,x_1,\ldots,x_{\tau-1},x^*\}$ be a complete search path of the old LP (\ref{eq: old canonical LP 1}-\ref{eq: old canonical LP 2}), i.e. $\{x_0,x_1,\ldots,x_{\tau-1},x^*\}$ is a sequence of adjacent shadow vertices, and $\{\Omega_0,\ldots,\Omega_{\tau}\}$ be the corresponding active constraint set. Denote $x_t\in \Pi$ the last shadow vertex in path $\Pi$ satisfying the new constraint, i.e
\begin{align}
  c^Tx_t=&\max_{x_i\in \Pi} c^T x_i \label{eq: 1}\\
  s.t. & \hat{A}_{m+1} x_i \leq \hat{b}_{m+1}. \label{eq: 2}
\end{align}
With the iterative shadow vertex simplex algorithm \ref{algorithm: iterative shadow vertex method}, let $\Pi_{new}=\{x_0,x_1,\ldots,x_t,\hat{x}_{t+1},$ $\ldots, \hat{x}_{\hat{\tau}-1},x^+\}$ be the search path of the new LP (\ref{eq: new canonical form 1}-\ref{eq: new canonical form 2}), and $\Omega_0,\Omega_1,\ldots, \Omega_t,\hat{\Omega}_{t+1},\ldots,$ $ \hat{\Omega}_{\hat{\tau}-1},\hat{\Omega}_{\hat{\tau}}$ be the corresponding active constraint sets. The new constraint (\ref{eq: new constraint}) stays in the active constraint set $\{\hat{\Omega}_{t+1},\ldots ,\hat{\Omega}_{\hat{\tau}}\}$, i.e. $m+1\in \hat{\Omega}_i$ for all $i=t+1,\ldots,\hat{\tau}$.
\end{lemma}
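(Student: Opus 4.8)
The plan is to prove the stronger unified statement that $m+1\in\hat\Omega_i$ for every $i\in\{t+1,\dots,\hat\tau\}$ by a single argument by contradiction, exploiting the fact that the iterative method reuses the table built for the old LP and hence keeps the same projection plane $span(u,c)$ and the same sweep direction $w(\mu)=u+\mu c$. Three facts set up the argument. First, because the objective $c^Tx_i$ increases strictly along the old path (Lemma \ref{lemma: pivot direction}) and $x_t$ is the feasible vertex of largest objective value by \eqref{eq: 1}--\eqref{eq: 2}, every old-path vertex strictly after $x_t$ violates the new constraint \eqref{eq: new constraint}. Second, as established in Section III, the old search path is the ordered list of maximizers of $w(\mu)^Tx$ over $X$ as $\mu$ runs from $0$ to $\infty$; let $\mu_{t+1}^{\mathrm{old}}$ be the parameter at which the old path leaves $x_t$. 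Third, since $\hat X\subseteq X$ and $x_t\in\hat X$, the vertex $x_t$ remains maximal over $\hat X$ at least as long as it is maximal over $X$, so the parameter $\mu_{t+1}$ at which the new path leaves $x_t$ satisfies $\mu_{t+1}\ge \mu_{t+1}^{\mathrm{old}}$.

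Suppose, for contradiction, that $m+1\notin\hat\Omega_i$ for some $i\in\{t+1,\dots,\hat\tau\}$. Then $\hat\Omega_i$ consists of $n$ linearly independent old constraints, so $\hat x_i$ is in fact a vertex of the old feasible set $X$. Being the $i$th shadow vertex of the new LP, $\hat x_i$ is maximal with respect to $w(\mu)^Tx$ over $\hat X$ for all $\mu$ in an interval $[\mu_i,\mu_{i+1}]$; by Theorem \ref{theorem: optimality condition primal} this is equivalent to $w(\mu)=\sum_j\rho_j(\mu)\hat A_{\hat\Omega_i^j}^T$ for some $\rho(\mu)\ge 0$. Since every index in $\hat\Omega_i$ is an old constraint, the corresponding rows of $\hat A$ coincide with rows of $A$, so the same nonnegative representation expresses $w(\mu)$ in terms of the active rows of $A$. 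Applying Theorem \ref{theorem: optimality condition primal} in the opposite direction to the old LP then shows that $\hat x_i$ is maximal with respect to $w(\mu)^Tx$ over $X$ for every $\mu\in[\mu_i,\mu_{i+1}]$.

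It remains to identify this maximizer with an old-path vertex. Pick $\mu^\circ$ in the interior of $[\mu_i,\mu_{i+1}]$ avoiding the finitely many transition parameters of the old path; then $\mu^\circ>\mu_{t+1}\ge\mu_{t+1}^{\mathrm{old}}$ because $i\ge t+1$, so the old-path maximizer of $w(\mu^\circ)^Tx$ over $X$ is some $x_r$ with $r\ge t+1$. Assumption \ref{assumption: nondegeneracy} guarantees that for such a generic $\mu^\circ$ the maximizer of $w(\mu^\circ)^Tx$ over $X$ is a unique vertex, so $\hat x_i=x_r$. This is the contradiction: $\hat x_i$ lies in $\hat X$ and therefore satisfies the new constraint, whereas $x_r$ with $r\ge t+1$ violates it. Hence $m+1\in\hat\Omega_i$ for all $i=t+1,\dots,\hat\tau$.

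The routine parts are the table and projection-plane bookkeeping and the verification that the interior parameter $\mu^\circ$ can be chosen generically. The step I expect to require the most care is the maximality transfer of the middle paragraph: one must check that $\hat x_i$ really is a vertex of $X$ with active set exactly $\hat\Omega_i$ (which needs the nondegeneracy/general-position hypothesis to hold for the augmented system $(\hat A,\hat b)$, not only for $(A,b)$), and that ``maximal over the smaller set $\hat X$'' can be upgraded to ``maximal over the larger set $X$'' precisely because the certifying active constraints survive unchanged from the old LP.
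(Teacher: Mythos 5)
Your proof is correct, and it reaches the paper's conclusion by a route that is geometrically the same in spirit but structurally different. The paper splits the lemma into two separate arguments: first a purely combinatorial, table-level argument that $m+1$ enters $\hat{\Omega}_{t+1}$ (if the pivot rule does not select the new constraint, the same moving-in index $l$ as in the old path is selected, forcing $\hat{x}_{t+1}=x_{t+1}$, which is feasible for the new LP with $c^T x_{t+1}>c^T x_t$, contradicting (\ref{eq: 1})--(\ref{eq: 2})); and second a sweep-direction argument that $m+1$ never leaves, phrased in terms of $w_{t+i+1}$ lying (or not) in the convex cone of $w_t$ and $c$. You instead run a single unified contradiction for all $i\geq t+1$: if $m+1\notin\hat{\Omega}_i$, then $\hat{x}_i$ is a vertex of the old feasible set, the nonnegative certificate from Theorem \ref{theorem: optimality condition primal} over $\hat{X}$ uses only old rows and hence transfers to maximality over $X$, and a generic sweep parameter $\mu^\circ>\mu^{\mathrm{old}}_{t+1}$ identifies $\hat{x}_i$ with an old-path vertex $x_r$, $r\geq t+1$, which violates (\ref{eq: new constraint}) while $\hat{x}_i$ satisfies it. Your version buys uniformity and rigor: the maximality-transfer step that you make explicit is exactly what the paper's second part uses implicitly when it asserts $\hat{x}_{t+i+1}$ would have to lie on $\Pi$, and your flagged caveat---that Assumption \ref{assumption: nondegeneracy} must be read as holding for the augmented system $(\hat{A},\hat{b})$---is likewise tacitly assumed by the paper (its first part invokes the assumption for the new LP). What the paper's two-part structure buys in exchange is that the entry step $t+1$ needs no genericity or uniqueness-of-maximizer machinery at all, only the pivot rule and feasibility of pivots; your unified argument leans on standard parametric-simplex facts (monotonicity of the sweep parameter along the new path, uniqueness of the maximizer at non-transition $\mu$) which you correctly identify as the points requiring care, and which do hold under the relaxed non-degeneracy assumption.
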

\begin{proof}
First, we show that $\hat{\Omega}_{t+1}$ contains $m+1$. Suppose in the search path $\Pi$ from step $x_t$ to $x_{t+1}$, the move-out constraint is $k$ and the move-in constraint is $l$. In the search process of the new LP from $x_t$ to $\hat{x}_{t+1}$, if the new constraint $\hat{A}_{m+1}x\leq \hat{b}_{m+1}$ is not chosen either because $\gamma_{m+1,k}\geq 0$ or because $\phi_{m+1}/\gamma_{m+1,k}$ is not the maximal one, $l$ will be chosen again to enter the active constraint set which implies that $\hat{x}_{t+1}=x_{t+1}$ based on Assumption \ref{assumption: nondegeneracy}. Since the pivot step guarantees feasibility, we have $x_{t+1}$ satisfies the new constraint. Meanwhile, we also have $c^Tx_{t+1}>c^Tx_t$, which contradicts equation (\ref{eq: 1}-\ref{eq: 2}). Therefore, $m+1$ must be in $\hat{\Omega}_{t+1}$.

Next, we show that once constraint $m+1$ enters the active constraint set, it will stay there until the optimum vertex is found. Let us suppose that $\hat{\Omega}_{t+i}$ contains $m+1$, but $\hat{\Omega}_{t+i+1}$ does not, for some $i=1,\ldots,\hat{\tau}-t-1$. Since $x_t$ and $\hat{x}_{t+i+1}$ are both shadow vertices, there must exist $w_t,w_{t+i+1}\in span(u,v)$ such that $x_t,\hat{x}_{t+i+1}$ are the maximum with respect to $w_t$ and $w_{t+i+1}$, respectively, according to Theorem \ref{theorem: shadow vertex}. The pivot rule of shadow vertex simplex method guarantees that $arc(w_t,c)>arc(w_{t+i+1},c)$ and $w_{t+i+1}$ is in the convex cone formed by $w_t$ and $c$. Meanwhile, $\hat{x}_{t+i+1}$ guarantees value improvement and feasibility, i.e. $c^T\hat{x}_{t+i+1}>c^T x_t$ and $\hat{A}_{m+1}\hat{x}_{t+i+1} \leq \hat{b}_{m+1}$. According to (\ref{eq: 1}-\ref{eq: 2}), we see that $\hat{x}_{t+i+1}$ is not in the search path $\Pi$ of the old LP, and $w_{t+i+1}$ does not lie in the convex cone formed by $w_t$ and $c$, which contradicts the previous conclusion. Therefore, if $\hat{\Omega}_{t+i}$ contains $m+1$, so does $\hat{\Omega}_{t+i+1}$ for all $i=1,\ldots,\hat{\tau}-t-1$.
\end{proof}

Lemma \ref{lem: m+1 is always active} implies that the iterative shadow vertex simplex method only visits shadow vertices whose active constraint set contains $m+1$, and the number of pivot steps $\hat{\tau}$ can thus be no greater than the number of shadow vertices $\hat{T}$ whose active constraint set contains $m+1$. We are interested in the expected value of $\hat{T}$. 
\begin{assumption}
  \label{assp: shadow vertex same possibility}
  Consider the canonical form of the new LP problem (\ref{eq: new canonical form 1}-\ref{eq: new canonical form 2}). Suppose Assumption \ref{assumption: nondegeneracy} holds. Any $n$-element subset $\Omega$ of $\{1,\ldots,m,m+1,m+2\ldots,m+1+n\}$ such that $A_{\Omega^1},\ldots,A_{\Omega^n}$ are linearly independent has the same probability $\xi\in (0,1)$ to form a shadow vertex with respect to $span(u,c)$.
\end{assumption}
\begin{theorem}
  \label{thm: expected hat T}
Suppose Assumption \ref{assumption: nondegeneracy} and \ref{assp: shadow vertex same possibility} hold. We have $$E(\hat{T})=\frac{n}{m+1+n-(m+1)/\binom{m+n}{n}} E(T). $$
\end{theorem}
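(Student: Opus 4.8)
The plan is to use linearity of expectation together with the equal-probability structure of Assumption \ref{assp: shadow vertex same possibility} to reduce both $E(T)$ and $E(\hat{T})$ to two counting problems over $n$-element constraint subsets, and then to take a ratio that has already been evaluated in the proof of Theorem \ref{thm: unchanging probability}.

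First I would write $T=\sum_{\Omega}\mathbf{1}[\Omega\text{ forms a shadow vertex w.r.t. }span(u,c)]$, where the sum ranges over all $n$-element subsets $\Omega$ of $\{1,\ldots,m+1+n\}$ whose rows $A_{\Omega^1},\ldots,A_{\Omega^n}$ are linearly independent. By Assumption \ref{assp: shadow vertex same possibility} each such indicator has expectation $\xi$, so $E(T)=\xi N$, where $N$ is the number of admissible (i.e.\ linearly independent) $n$-subsets. In the same way, Lemma \ref{lem: m+1 is always active} tells us that the relevant shadow vertices for the iterative method are exactly those whose active constraint set contains the new constraint $m+1$, so $\hat{T}=\sum_{\Omega\ni m+1}\mathbf{1}[\Omega\text{ forms a shadow vertex}]$ ranges only over admissible subsets containing index $m+1$, giving $E(\hat{T})=\xi N_1$ with $N_1$ the number of admissible $n$-subsets that contain $m+1$.

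Next I would carry out the two counts using Assumption \ref{assumption: nondegeneracy}. Since the $n$ probability constraints (rows $m+2,\ldots,m+1+n$) are the only rows that can fail to be linearly independent, and only when all $n$ of them are selected together, the single inadmissible $n$-subset is $\{m+2,\ldots,m+1+n\}$; hence $N=\binom{m+1+n}{n}-1$. For $N_1$, fixing $m+1\in\Omega$ leaves $n-1$ indices to be chosen from the remaining $m+n$ constraints, and any such choice contains at most $n-1$ probability constraints and is therefore admissible by Assumption \ref{assumption: nondegeneracy}; thus $N_1=\binom{m+n}{n-1}$. Taking the ratio and cancelling the common factor $\xi$ yields $E(\hat{T})=\frac{N_1}{N}E(T)=\frac{\binom{m+n}{n-1}}{\binom{m+1+n}{n}-1}E(T)$, and invoking the algebraic identity $\binom{m+n}{n-1}\big/\left(\binom{m+1+n}{n}-1\right)=n\big/\left(m+1+n-(m+1)/\binom{m+n}{n}\right)$ already established in the proof of Theorem \ref{thm: unchanging probability} gives the claimed expression.

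The only real subtlety is the bookkeeping of admissible subsets: one must confirm that the lone linearly dependent $n$-subset is removed from $N$ while it plays no role in $N_1$ (it never contains $m+1$, which is a normal constraint), and that every subset counted in $N_1$ is genuinely admissible. Both facts follow directly from the ``at most $n-1$ elements from $\{A_{m+1},\ldots,A_{m+n}\}$'' clause of Assumption \ref{assumption: nondegeneracy}. Everything else is routine and structurally identical to the counting performed for Theorem \ref{thm: unchanging probability}, so I expect no serious obstacle beyond making the indicator-sum reduction and the subset counts precise.
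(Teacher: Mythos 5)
Your proof is correct and takes essentially the same route as the paper's own: both use the equal probability $\xi$ from Assumption \ref{assp: shadow vertex same possibility} to get $E(T)=\left(\binom{m+1+n}{n}-1\right)\xi$ (excluding the one dependent subset of probability constraints) and $E(\hat{T})=\binom{m+n}{n-1}\xi$ (subsets forced to contain the new constraint $m+1$), then take the ratio via the same identity as in Theorem \ref{thm: unchanging probability}. Your explicit indicator-sum formulation and the bookkeeping that the lone inadmissible subset $\{m+2,\ldots,m+1+n\}$ never contains $m+1$ simply make precise what the paper compresses into ``the last $n$ constraints can not form a vertex.''
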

\begin{proof}
First, notice that $$E(T)=\sum_{\{i_1,i_2,\ldots,i_n\}\in\{1,\ldots,m+1+n\}}P(\hbox{constraint $i_1,i_2,\ldots,i_n$ forms a shadow vertex}).$$ Since the last $n$ constraints can not form a vertex, together with Assumption \ref{assp: shadow vertex same possibility}, it can be shown that $E(T)=(\binom{m+1+n}{n}-1)\xi$.

Next, we have $\displaystyle E(\hat{T})=\sum_{\{i_1,\ldots,i_{n-1}\}\in\{1,\ldots,m,m+2,\ldots,m+1+n\}}P(\hbox{constraints $i_1,i_2,\ldots,i_{n-1},$}$ $\hbox{$m+1$ form a shadow vertex})=\binom{m+n}{n-1}\xi.$ Therefore, we have $$E(\hat{T})=\frac{n}{m+1+n-(m+1)/\binom{m+n}{n}} E(T). $$
\end{proof}
Notice that if the computational complexity of shadow vertex simplex method is $O(m^{\frac{1}{n-1}}n^3)$ as shown in \cite{borgwardt2012simplex}, then the computational complexity of the iterative shadow vertex method is $O(m^{\frac{1}{n-1}-1}n^4)$. If $n> 2$, then the computational complexity of the iterative shadow vertex decreases as $m$ grows. This is because as $m$ grows, it is less possible to re-compute the security strategy (Theorem \ref{thm: unchanging probability}), which results in zero pivot step and the decreased number of average pivot steps.

Under the assumption that $x^*$ violates new constraint (\ref{eq: new constraint}), we will analyze the computational complexity of both shadow vertex method and iterative shadow vertex method.
\begin{theorem}
  \label{thm: conditional computational complexity comparison}
Suppose Assumption \ref{assumption: nondegeneracy} and \ref{assp: shadow vertex same possibility} holds. Let $x^*$ be the optimal vertex of the old LP (\ref{eq: old canonical LP 1}-\ref{eq: old canonical LP 2}). If the new payoff column $g$ is independently distributed with the existing payoff matrix $G$, then $$E(\hat{T}|\hbox{$x^*$ violates (\ref{eq: new constraint})})< E(T|\hbox{$x^*$ violates (\ref{eq: new constraint})}). $$
\end{theorem}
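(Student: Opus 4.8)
The plan is to separate the claim into a weak inequality, which is automatic, and a strict gap, which is the real content. For every realization of $[G\ g]$ the shadow vertices of the new LP whose active set contains the index $m+1$ form a subset of all shadow vertices, so $\hat T\le T$ holds pointwise and hence $E(\hat T\mid E)\le E(T\mid E)$, where $E$ denotes the conditioning event $\{x^*\text{ violates }(\ref{eq: new constraint})\}$. Writing $E(T\mid E)-E(\hat T\mid E)=E(T-\hat T\mid E)$ and noting that $T-\hat T$ counts the shadow vertices whose active set \emph{avoids} $m+1$, it suffices to prove $P(T>\hat T\mid E)>0$: conditioned on the strategy changing, there is with positive probability a shadow vertex of the new LP that does not use the new constraint.

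First I would exhibit a concrete such vertex, the old initial vertex $x_0=[\mathbf{0}\ \min_j G_{n,j}]^T$, whose active set $\{l,m+2,\ldots,m+n\}$ excludes $m+1$. The enabling observation is a persistence property: an old shadow vertex that stays feasible after the cut is again a shadow vertex of the new LP. Indeed, by Theorem~\ref{theorem: shadow vertex} there is $w\in span(u,c)\setminus\{0\}$ with $w^Tx_0=\max_{x\in X}w^Tx$; since the new feasible set $\hat X$ satisfies $\hat X\subseteq X$ and $x_0\in\hat X$, we get $w^Tx_0=\max_{x\in\hat X}w^Tx$, so condition 3 of Theorem~\ref{theorem: shadow vertex} applied to the new LP shows $x_0$ is a shadow vertex there, with unchanged active set (the new constraint being strictly slack at $x_0$ for generic $g$) and therefore without $m+1$. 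Feasibility of $x_0$ is exactly $g_n\ge \min_j G_{n,j}$, so on $E\cap\{g_n>\min_j G_{n,j}\}$ we have $\hat T\le T-1$.

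It then remains to show $P\big(E\cap\{g_n>\min_j G_{n,j}\}\big)>0$, and here I would condition on $G$ and use that $g$ is independent of $G$. On a positive-probability set of matrices the old optimum $x^*$ is unique (e.g. under Assumption~\ref{assp: unique optimum}) with $\bar x^*_n<1$, so that $x^*\neq x_0$. By Theorem~\ref{theorem: optimality reserved} the event $E$ is $\{g^T\bar x^*<V\}$, while keeping $x_0$ feasible is $\{g_n>\min_j G_{n,j}\}$; these are two open half-spaces in $g$, and because $\sum_{i<n}\bar x^*_i=1-\bar x^*_n>0$ one can drive $g^T\bar x^*$ below $V$ through the coordinates $g_i$, $i<n$, while holding $g_n$ large, so their intersection is a nonempty open set. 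Geometrically this is just the statement that a cut of the form (\ref{eq: new constraint}) can separate the two distinct vertices $x^*$ and $x_0$, removing the former and retaining the latter. Under the independent, rotationally symmetric distribution used for the complexity estimates this open set carries positive measure, so integrating over $G$ yields $P(E\cap\{g_n>\min_j G_{n,j}\})>0$ and hence $E(\hat T\mid E)<E(T\mid E)$.

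The hard part is this last step: converting ``a separating cut exists'' into ``positive probability'' requires pinning down the distributional hypothesis on $g$ (continuity or full support, or the rotational symmetry invoked at the end of Section~III) and checking that the relevant open set is not null. I would also verify the generic facts used en route — uniqueness of $x^*$ together with $\bar x^*_n<1$ on a positive-probability set of $G$, and strict slackness of the new constraint at $x_0$ — all of which fail only on measure-zero sets under Assumption~\ref{assumption: nondegeneracy}. The weak inequality and the persistence lemma are routine; essentially all the difficulty lies in certifying the strict gap probabilistically.
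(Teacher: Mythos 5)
Your proposal is correct in outline but takes a genuinely different route from the paper. You share the first move --- $\hat{T}\leq T$ pointwise, since shadow vertices whose active set contains $m+1$ form a subset of all shadow vertices, so only strictness is at stake --- but the paper then argues combinatorially: it expands $E(T\mid\cdot)$ as a sum over all $n$-element candidate active sets, splits off the subsets containing $m+1$ (whose sum is exactly $E(\hat{T}\mid\cdot)$), and kills the remainder by asserting that whether old constraints $A_{i_1},\ldots,A_{i_n}$ generate a cone meeting $span(u,c)$ is independent of whether the new column cuts off $x^*$, so each conditional probability equals its unconditional value $\xi>0$ from Assumption \ref{assp: shadow vertex same possibility}; this yields the quantitative gap $\left(\binom{m+n}{n}-1\right)\xi$, consistent with Theorem \ref{thm: expected hat T}. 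You instead exhibit one explicit witness: the initial vertex $x_0$ persists as a shadow vertex of the new LP whenever $g_n>\min_j G_{n,j}$ (your persistence argument via condition 3 of Theorem \ref{theorem: shadow vertex} is sound), and you reduce strictness to $P\bigl(E\cap\{g_n>\min_j G_{n,j}\}\bigr)>0$ via an open half-space argument in $g$. What each buys: the paper's proof stays within its stated assumptions and gives the exact gap, though its independence step is delicate (the cut event depends on $G$ through $x^*$, which the paper glosses over; note also its proof text conditions on the complementary event ``does not violate,'' evidently a typo). Your proof is more concrete and sidesteps that global independence assertion, but --- as you yourself flag --- the final positivity step needs distributional hypotheses (continuous or full-support $g$, positive probability of $\bar{x}^*_n<1$) that appear nowhere in the theorem; to match the statement as given you would need to extract that positivity from Assumption \ref{assp: shadow vertex same possibility} itself, as the paper does via $\xi\in(0,1)$, and you obtain only strict inequality rather than the size of the gap.
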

\begin{proof}
First, we have
\begin{align*}
&E(\hat{T}|\hbox{$x^*$ does not violate (\ref{eq: new constraint})})\\
=&\sum_{\{i_1,\ldots,i_{n-1}\}\in\{1,\ldots,m,m+2,\ldots,m+1+n\}} P(\hbox{constraint $i_1,\ldots,i_{n-1},m+1$ form a shadow vertex}\\
&|\hbox{$x^*$ does not violate (\ref{eq: new constraint})}).
\end{align*}

Similarly, we have
\begin{align*}
&E(T|\hbox{$x^*$ does not violate (\ref{eq: new constraint})})\\
=&\sum_{\{i_1,\ldots,i_n\}\in\{1,\ldots,m,m+2,\ldots,m+1+n\}} P(\hbox{constraint $i_1,\ldots,i_{n}$ form a shadow vertex}\\
& |\hbox{$x^*$ does not violate (\ref{eq: new constraint})})\\
&+\sum_{\{i_1,\ldots,i_{n-1}\}\in\{1,\ldots,m,m+2,\ldots,m+1+n\}} P(\hbox{constraint $i_1,\ldots,i_{n-1},m+1$ form a shadow vertex}\\
&|\hbox{$x^*$ does not violate (\ref{eq: new constraint})})\\
=&\sum_{\{i_1,\ldots,i_n\}\in\{1,\ldots,m,m+2,\ldots,m+1+n\}} P(\hbox{constraint $i_1,\ldots,i_{n}$ form a shadow vertex}\\
& |\hbox{$x^*$ does not violate (\ref{eq: new constraint})})+E(\hat{T}|\hbox{$x^*$ does not violate (\ref{eq: new constraint})})
\end{align*}
According to Theorem \ref{theorem: shadow vertex} and \ref{theorem: optimality condition primal}, constraint $i_1,\ldots,i_{n}$ form a shadow vertex if and only if the convex cone generated by $A_{i_1},\ldots,A_{i_n}$ intersects with the plane spanned by $u$ and $c$. Since constraint $i_1,\ldots,i_{n}$ are chosen from the existing constraints which are independent with the new constraint (\ref{eq: new constraint}), the event that the convex cone generated by existing constraint vectors $A_{i_1},\ldots,A_{i_n}$ intersects with the plane spanned by $u$ and $c$ is independent with the event that the new constraint (\ref{eq: new constraint}) cuts $x^*$ off, and hence we have
\begin{align*}
  &\sum_{\{i_1,\ldots,i_n\}\in\{1,\ldots,m,m+2,\ldots,m+1+n\}} P(\hbox{constraint $i_1,\ldots,i_{n}$ form a shadow vertex}\\
  &|\hbox{$x^*$ does not violate (\ref{eq: new constraint})})\\
  =&\sum_{\{i_1,\ldots,i_n\}\in\{1,\ldots,m,m+2,\ldots,m+1+n\}} P(\hbox{constraint $i_1,\ldots,i_{n}$ form a shadow vertex})\\
  =&\sum_{\{i_1,\ldots,i_n\}\in\{1,\ldots,m,m+2,\ldots,m+1+n\}} \epsilon >0
\end{align*}
which completes the proof.
\end{proof}

The above analysis assumes that we have a complete search path $\Pi$ of the old LP. Sometimes, the new constraint may cut off a part of the search path but not the optimal vertex. In this case, we can immediately provide the optimal vertex, and then repair the search path in the background to get prepared for player 2's next new action. The idea of repairing the search path is similar to the iterative shadow vertex method, and we provide the algorithm as follows.
\begin{algorithm}[Search path repair algorithm] \hfill{}
  \begin{enumerate}
    \item If $x^*$ violates the new constraint (\ref{eq: new constraint}), go to step 8).
    \item Set $\Pi_{new}=\emptyset$ and $table_{new}=\emptyset$.
    \item From $i=0$ to $\tau-1$, if $x_i$ satisfies the new constraint, then add $x_i$ into the new path $\Pi_{new}$ and update $table_{new}(i)$ by inserting a new row vector $\gamma_{m+1}$ and a new scalar $\phi_{m+1}$ with respect to the new constraint to $table(i)$. Otherwise, stop the test.
    \item If all the elements in $\Pi$ satisfies the new constraint, go to step 8).
    \item Repair the search path.
        \begin{enumerate}
            \item Let $j=-1$.
            \item Let $j=j+1$. Choose the last element in $\Pi_{new}$ as the initial shadow vertex, and run step 2)-7) of the pivot algorithm, Algorithm \ref{algorithm: pivot}. Add the visited shadow vertex $x_{i+j}$ to $\Pi_{new}$ and update $table_{new}(i+j)$.
            \item If $x_{i+j} \in \Pi$ or $x_{i+j}$ is the optimal vertex, then go to step 6). Otherwise, go to step b).
        \end{enumerate}
    \item Add all the visited shadow vertices after $x_{i+j}$ in path $\Pi$ to the new path $\Pi_{new}$. Update the corresponding tables by inserting a new row vector $\gamma_{m+1}$ and a new scalar $\phi_{m+1}$ with respect to the new constraint, and add the updated tables to $table_{new}$.
    \item Let $table=table_{new}$, and $\Pi=\Pi_{new}$.
    \item End.
  \end{enumerate}
\end{algorithm}

\section{Numerical examples}
In this section, we consider several numerical examples to demonstrate the computational properties of the iterative shadow vertex method, and compare them to our theoretical predictions in Section \ref{sec: unchanging strategy} and \ref{sec: iterative s.v. method}.

We first generate a $10\times 100$ random payoff matrix whose elements are identically, independently, and uniformly distributed among the integers from $-100$ to $100$, then solve the corresponding zero-sum game using regular shadow vertex method, and record all visited shadow vertices and the corresponding tables. Then, the column player generates a new action, and hence produces a new random payoff column whose elements are identically, independently, and uniformly distributed among the integers from $-100$ to $100$. Notice that the newly generated payoff column is independent of the existing payoff matrix. After the new action is generated, we first decide whether re-computation of the security strategy is necessary according to Theorem \ref{theorem: optimality reserved}. If so, we use both regular shadow vertex method and iterative shadow vertex method to find the new security strategy, and record the numbers of pivot steps of both methods. This experiment is run $500$ times. Following the same steps, we also test iterative shadow vertex method in $10\times 200$, $10\times 300$, $\ldots$, $10\times 1000$ payoff matrices. The probability of re-computing the security strategy, the average number of pivot steps, and the average number of pivot steps conditioned on re-computation are given in the plots in Figure \ref{fig: main results}, where $x$-axis is the size of the action set of player 2.
\begin{figure}
  \centering
  \includegraphics[width=\textwidth]{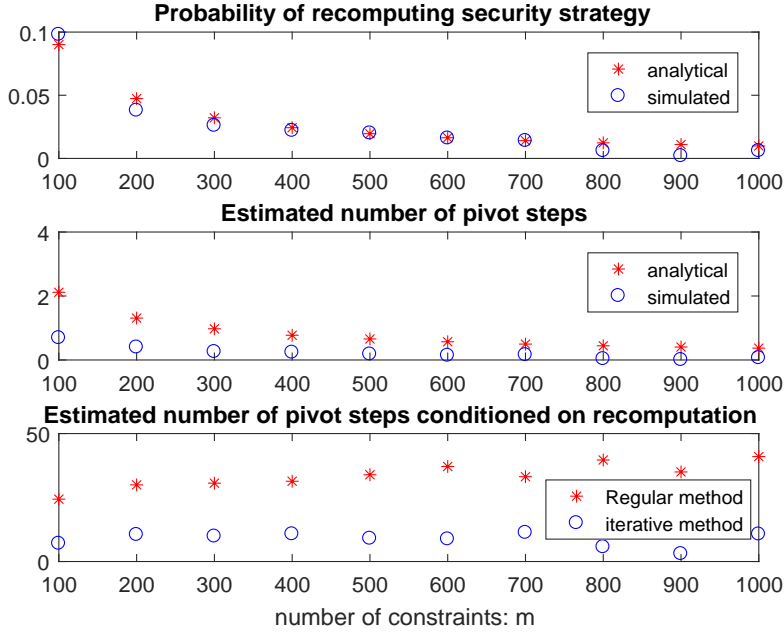}\\
  \caption{The x-axis is the original size of player 2's action set, and the size of player 1's action set is fixed to be 10. The top plot is the probability that player 1's security strategy changes when player 2 adds a new action, where red stars are the theoretical results according to Theorem \ref{thm: unchanging probability}, and blue circles are probability based on the simulation results. The middle plot is the expected number of pivot steps of iterative shadow vertex method, where red stars are the theoretical upper bounds computed according to Theorem \ref{thm: expected hat T}, and blue circles are the average number of pivot steps based on simulation results. The bottom plot is the average number of pivot steps conditioned on recomputation of player 1's security strategy based on simulation results, where red stars are the average number of pivot steps using regular shadow vertex method, and blue circles are the average number of pivot steps using iterative shadow vertex method.}
  \label{fig: main results}
\end{figure}

The top plot of Figure \ref{fig: main results} shows the probability that player 1's security strategy changes. The $x$-axis is the original size of the action set of player 2. The red stars are the analytical probability computed according to Theorem \ref{thm: unchanging probability}, and the blue circles are the empirical probability derived from the simulation results. We see that the simulated results match the analytical results. Meanwhile, we also notice that the probability of re-computing security strategy is decreasing with respect to $m$, the size of the action set of player 2, which meets our expectation.

The middle plot of Figure \ref{fig: main results} gives the average number (blue circles) of pivot steps of iterative shadow vertex method, and the appropriately scaled average number of (red stars) of pivot steps of regular shadow vertex method in accordance with the result in Theorem \ref{thm: expected hat T}. We see that the blue circles and red stars decreases in the same manner as the size of player 2's action set grows. It matches the result shown in Theorem \ref{thm: expected hat T} which indicates that the computational complexity of iterative shadow vertex method and the appropriately scaled computational complexity of regular shadow vertex method are the same.

The bottom plot of Figure \ref{fig: main results} shows that the average number (blue circles) of pivot steps of iterative shadow vertex method is always less than the average number (red stars) of pivot steps of regular shadow vertex method conditioned on the situation that player 1 needs to re-compute the security strategy, which agrees with the theoretical predictions of theorem \ref{thm: conditional computational complexity comparison}.

\section{Case study: urban security problem}
We now consider a more applied example inspired by the urban security scenario introduced in \cite{tsai2010urban}. In this paper, an urban area is modeled as a graph where edges denote roads, and vertices denote places of interest. This area have several main entrances denoted as source nodes in the graph, and several targets that attackers want to attack. With limited resources, defenders need to set up checkpoints on edges. If a checkpoint is in the path of the attackers, defenders get a corresponding reward. Otherwise, penalty is issued to the defenders. 

\begin{figure}
  \centering
  \includegraphics[width=\textwidth]{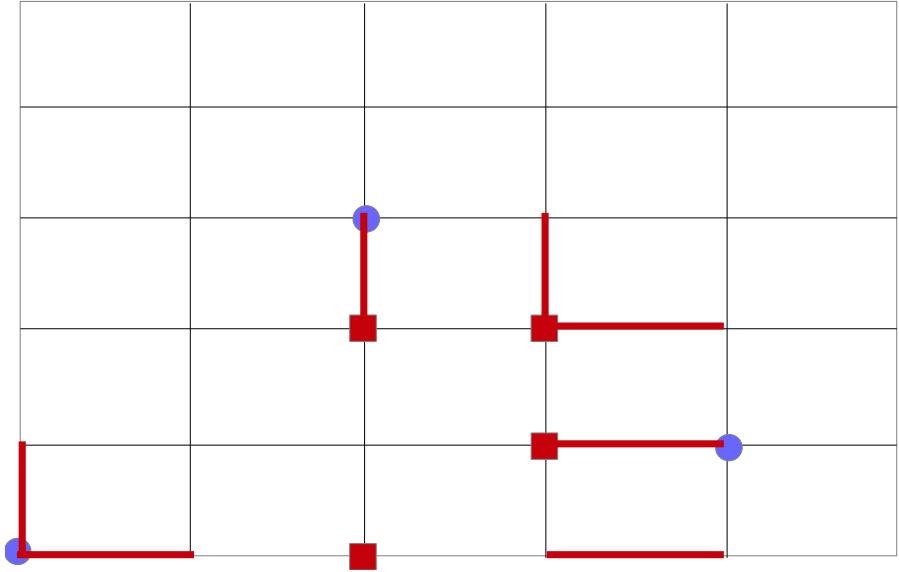}\\
  \caption{The map of an urban area, where edges and vertices denote roads and points of interest. Blue circles are source nodes, and red squares are targets. The defenders can use an expected number of 3 edges to set checkpoints, and the attackers always use the shortest path to attack the targets. If a checkpoint is in the path of the attacker, defenders get reward 1, 0 otherwise. The defenders' security strategy is to set a checkpoint on the highlighted red edges with equal probability $\frac{3}{7}$, and the game value is $\frac{3}{7}$. Notice that the highlighted edges are also the minimum number of edges that can cut all paths (only shortest paths are considered here) from sources to targets. }
  \label{fig: map}
\end{figure}
We particularize this model to the map of Figure \ref{fig: map} with $3$ source nodes indicated by blue circles and $4$ targets indicated by red squares. This map contains 36 nodes and 60 edges. We assume that the attackers use the shortest path to hit the target, and the defenders can use an expected number of $3$ edges. If a checkpoint is in the path of the attackers, defenders get reward $1$, $0$ otherwise. Let $x^{i}$ indicate the probability that a checkpoint is set on edge $i$, and $y^j$ indicate the probability that attackers choose path $j$. The urban security problem can be modeled as the following maxmin problem.
\begin{align*}
  V=&\max_{x\in \mathbb{R}^{60}}\min_{y\in \Delta(P)} x^TGy,\\
  s.t.& \mathbf{0}\leq x\leq \mathbf{1}\\
  & \mathbf{1}^T x =3
\end{align*}
where $P$ is the set of paths that attackers can take. Notice that $x^i$ is the probability that a checkpoint is set on edge $i$, but the vector $x$ is not a probability vector. The constraint $\mathbf{1}^T x =3$ corresponds to the requirement that the defenders can use an expected number of $3$ edges. According to the strong duality theorem, it can be transformed to an LP problem
\begin{align*}
  V=&\max_{x\in \mathbb{R}^{60}} \ell,\\
  s.t.& G^T x \geq \ell\mathbf{1} \\
  & \mathbf{0}\leq x\leq \mathbf{1}\\
  & \mathbf{1}^T x =3
\end{align*}
The corresponding canonical form is as follows.
\begin{align*}
 V=&\max_{x\in \mathbb{R}^{60}} [0\ldots0\ 1]x,\\
  s.t. &\left[\begin{array}{cc}
  -G^T T &\mathbf{1} \\
  \mathbf{1}^T & 0\\
  -I_{n-1} & \mathbf{0} \\
  I_{n-1} & \mathbf{0}
  \end{array}
  \right]x \leq \left[\begin{array}{c}
  3G^Te_n \\
  3\\
  \mathbf{0} \\
  \mathbf{1}
  \end{array}
  \right]
\end{align*}
We use regular shadow vertex method to solve this problem, and find that the security strategy is to set up checkpoints at red highlighted edges with probability $\frac{3}{7}$, and the value of the game is $\frac{3}{7}$. With $2.4$GHz CPU and $4$GB memory, the computation time is about $0.35$ seconds, which is comparable to the time reported in \cite{tsai2010urban} to compute an equilibrium strategy for a grid of similar size (35 nodes and 58 edges) representing south Mumbai. Notice that the highlighted edges are also the minimum number of edges that can cut all paths (only shortest paths are considered here) from sources to targets.

Now, assuming there is a new target in the map, we use the iterative shadow vertex method to update the security strategy. There are two small differences between this situation and that described earlier: 1) $x$ is not a probability vector, 2) each new target will result in several new paths, which adds several columns to the payoff matrix at a time. More precisely, equation (26) is changed to $[-g^TT\ \mathbf{1}]x \leq 3g^T e_n$, where $g$ is the newly added payoff columns. However, these differences only affect our theoretical investigation of the algorithm, and algorithm \ref{algorithm: iterative shadow vertex method} is in fact applicable as is to this scenario as well. We change the new target node from left to right, from bottom to top. The average computation time is $0.1429$ seconds. Compared with the computation time of regular shadow vertex method, which is about $0.35$ seconds, iterative shadow vertex method cuts more than half of the average computation time.

\section{Conclusion and future work}
This paper studies how to efficiently update the saddle-point strategy of one player in a matrix game when the other player can add new actions in the game. We provide an iterative shadow vertex method to solve this problem, and show that the computational complexity is strictly less than the regular shadow vertex method. Moreover, this paper also presents a necessary and sufficient condition that a new saddle-point strategy is needed, and analyzes the probability of re-computing the saddle point strategy. Our simulation results demonstrates the main results.

A direct extension of the problem in this paper is its dual problem, i.e. the case when player 1 has a growing action set. In this case, the corresponding LP has new variables whose dual problem is exactly the same problem as studied in this paper. We can use iterative shadow vertex method to solve its dual problem first, and then figure out the optimal solution from the optimal solution of its dual problem. A further extension is the case when both players have growing action sets. A proposed direction is to deal with player 2's new action first, and then deal with player 1's action set.

\bibliography{}
\end{document}